\def\ps@pprintTitle{%
 \let\@oddhead\@empty
 \let\@evenhead\@empty
 \def\@oddfoot{}%
 \let\@evenfoot\@oddfoot}
\pgfplotsset{compat=1.14}
\newcolumntype{L}[1]{>{\raggedright\let\newline\\\arraybackslash\hspace{0pt}}m{#1}}
\newcolumntype{C}[1]{>{\centering\let\newline\\\arraybackslash\hspace{0pt}}m{#1}}
\newcolumntype{R}[1]{>{\raggedleft\let\newline\\\arraybackslash\hspace{0pt}}m{#1}}
\newtheorem{lemma}{Lemma}
\newtheorem{theorem}{Theorem}
\newtheorem{definition}{Definition}
\journal{SODA}
\begin{document}
\definecolor{dtsfsf}{rgb}{0.8274509803921568,0.1843137254901961,0.1843137254901961}
\definecolor{rvwvcq}{rgb}{0.08235294117647059,0.396078431372549,0.7529411764705882}
\definecolor{dtsfsf}{rgb}{0.8274509803921568,0.1843137254901961,0.1843137254901961}
\definecolor{sexdts}{rgb}{0.1803921568627451,0.49019607843137253,0.19607843137254902}
\definecolor{rvwvcq}{rgb}{0.08235294117647059,0.396078431372549,0.7529411764705882}
\definecolor{wrwrwr}{rgb}{0.3803921568627451,0.3803921568627451,0.3803921568627451}
\begin{frontmatter}

\author[a,b]{Mojtaba Abdolmaleki\corref{cor1}}
\author[a,b]{Yafeng Yin}
\author[a]{Neda Masoud}

\address[a]{Department of Civil and Environmental Engineering, University of Michigan, Ann Arbor, MI 48109, United States}

\address[b]{Department of Industrial and Operations Engineering, University of Michigan, Ann Arbor, MI 48109, United States}


\title{Minimum Weight Pairwise Distance Preservers}



\begin{abstract}

In this paper, we study the Minimum Weight Pairwise Distance Preservers (MWPDP) problem. Consider a positively weighted undirected/directed connected graph $G = (V, E, c)$ and a subset $P$ of pairs of vertices, also called demand pairs. A subgraph $G'$ is a distance preserver with respect to $P$ if and only if every pair $(u, w) \in P$ satisfies $dist_{G'} (u, w) = dist_{G}(u, w)$. In MWPDP problem, we aim to find the minimum-weight subgraph $G^*$ that is a distance preserver with respect to $P$. Taking a shortest path between each pair in $P$ gives us a trivial solution with the weight of at most $U=\sum_{(u,v) \in P} dist_{G} (u, w)$. Subsequently, we ask how much improvement we can make upon $U$. In other words, we opt to find a distance preserver $G^*$ that maximizes $U-c(G^*)$. Denote this problem as Cost Sharing Pairwise Distance Preservers (CSPDP), which has several applications in the planning and operations of transportation systems.

The only known work that can provide a nontrivial solution for CSPDP is that of Chlamt{\'a}{\v{c}} et al. (SODA, 2017). This algorithm works for unweighted graphs and guarantees a non-zero objective only if the optimal solution
is extremely sparse with respect to the trivial solution. We address this issue by proposing an $O(|E|^{1/2+\epsilon})$-approximation algorithm for CSPDP in weighted graphs that runs in $O((|P||E|)^{2.38} (1/\epsilon))$ time. Moreover, we prove CSPDP is at least as hard as $\text{LABEL-COVER}_{\max}$. This implies that CSPDP cannot be approximated within $O(|E|^{1/6-\epsilon})$ factor in polynomial time, unless there is an improvement in the notoriously difficult $\text{LABEL-COVER}_{\max}$.

\end{abstract}

\begin{keyword}
Pairwise Distance Preservers\sep MAX REP \sep Approximation algorithm;




\end{keyword}
\cortext[cor1]{Corresponding author. E-mail address: mojtabaa@umich.edu}
    \end{frontmatter}

\newpage
\section{Introduction}

\noindent A fundamental question in graph theory is to sparsify an input graph by finding a subgraph that has fewer edges or less total weight, while preserving some specific properties of the original graph. In the pairwise distance preservers problem, one is given a connected undirected/directed graph $G = (V, E)$ and a subset $P$ of pairs of vertices, also called demand pairs. The aim is to find a subgraph, $G^*$, with a minimum number of edges, that preserves the exact distance in the original graph, $dist_{G^*} (u, w) =  dist_{G}(u, w)$ for each demand pair in $P$ \citep{coppersmith2006sparse}. A large body of research is devoted to pairwise distance preservers. For recent work, see, e.g.,  \citet{abboud2016error,bodwin2016better,bodwin2017linear,chlamtavc2017approximating,bodwin2019structure,chang2018near,gajjar2017distance}. Pairwise distance preservers have application in research on several other related theoretical problems such as spanners, distance oracles, graph algorithms, etc \citep{abboud2016error,abboud20174,abboud2018hierarchy,alon2002testing,bodwin2015very,bodwin2016better,bollobas2005sparse}.

Similar to the work of \cite{elkin2007hardness}, we generalize the concept of pairwise distance preservers into a weighted version, Minimum Weight Pairwise Distance Preserver (MWPDP), where the input is an undirected/directed weighted graph $G = (V, E, c, l)$ and set of demand pairs $P$. Here, $c: E \rightarrow R^+$ and  $l: E \rightarrow R^+$  are the wight function and the length function, respectively. The goal is to find a subgraph, $G^*=(V^*,E^*)$, that minimizes $c(G^*)=\sum_{e \in E^*} c(e)$, while preserves the exact distance in the original graph, $dist_{G^*} (u, w) =  dist_{G}(u, w)$. We focus on the case where the weight and length functions coincide, i.e. $c(e)=l(e) : \forall e \in E$. This was also the focus of a number of other studies in spanners problems \citep{althofer1990generating,awerbuch1992efficient,chandra1992new, regev1995weight}.

This paper intends to study the Minimum Weight Pairwise Distance Preserver (MWPDP) problem from an optimization point of view. We can first obtain a trivial upper bound, $U=\sum_{(u,v) \in P} dist_{G} (u, w)$, by finding a shortest path for each pair of nodes in $P$ and considering their union as the subgraph $G'$. It becomes natural to ask about the amount of improvement that we can make upon $U$. In other words, is there any efficient algorithm that can find a distance preserver $G^*=(V,E^*)$ that maximizes $U-|c(E^*)|$? Denote this problem as Cost Sharing Pairwise Distance Preservers (CSPDP) problem. CSPDP has direct applications in transportation systems such as the planning of vehicle platooning  \citep{abdolmaleki2019itinerary, luo2020repeated,sethuraman2019effects}. The improvement $U-|c(E^*)|$ quantifies the potential profit from platooning, which will take some cost to form and maintain. As such, it is vital to guarantee that this improvement is higher than the required setup cost.

The only known work that can provide a nontrivial solution for CSPDP is the work of \cite{chlamtavc2017approximating} on MWPDP instances with $c(e)=l(e)=1, \forall e \in E$, which provides an $O(n^{3/5+\epsilon})$-approximation. It cannot guarantee to obtain a non-zero objective unless the optimal solution, $G^*$, is extremely sparse with respect to the trivial upperbound $U$. More specifically, the optimal solution should satisfy $\frac{|E^*|}{U} \leq \frac{1}{n^{3/5}}$. We address this issue by proposing an $O(m^{1/2+\epsilon})$-approximation algorithm for CSPDP in weighted graphs that runs in $O((|P|m)^{2.38} (1/\epsilon))$ time, where $m=|E|$ is the number of edges in graph $G$. 

\begin{theorem}\label{thm:final}
For any constant $\epsilon > 0$, there is a
polynomial-time $O(m^{1/2+\epsilon})$-approximation algorithm for any instance of CSPDP problem.
\end{theorem}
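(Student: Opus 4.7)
The plan is to cast CSPDP as a path-selection problem: for each demand pair $i=(s_i,t_i)\in P$ we must choose a shortest $s_i$-$t_i$ path $\pi_i$ in $G$, and the savings $U-c(\bigcup_i\pi_i)$ rewrites as $\sum_e c(e)(k_e-1)_+$, where $k_e$ counts how many chosen paths pass through $e$. Since the set of shortest $s_i$-$t_i$ paths is exactly the set of unit $s_i$-$t_i$ flows on the shortest-path DAG $G_i$ (computable in polynomial time per pair), I would introduce flow variables $f_e^i$ for each $i\in P$ and $e\in G_i$, together with edge-selection variables $y_e\in[0,1]$, with $y_e\ge f_e^i$ and standard flow conservation sending one unit from $s_i$ to $t_i$. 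Minimizing $\sum_e c(e)\,y_e$ yields a polynomial-size LP whose optimum $\mathrm{LP}^*$ gives a valid upper bound $U-\mathrm{LP}^*$ on the integral savings.

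\textbf{Two-regime rounding.} To convert a fractional solution into one path per pair while paying only an $O(m^{1/2+\epsilon})$ factor, I would guess a target savings $\tau$ on a $(1+\epsilon)$-net of size $O((\log m)/\epsilon)$ and dichotomize by an edge-reuse threshold $t=m^{1/2+\epsilon}$. In the \emph{heavy} regime, some edge $e$ is traversed by at least $t$ paths in the integral optimum; I would enumerate all $m$ candidates for such an edge, and for each $e=(u,v)$ build a feasible solution greedily by forcing $e$ into $\pi_i$ for every pair whose shortest-path distance admits $\mathrm{dist}_G(s_i,u)+c(e)+\mathrm{dist}_G(v,t_i)=\mathrm{dist}_G(s_i,t_i)$. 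In the \emph{light} regime, every edge is reused at most $t$ times in the optimum, so savings are spread across many edges; here I would independently sample each $\pi_i$ as a random $s_i$-$t_i$ walk consistent with the LP flow $f_e^i$, and argue via pairwise-collision calculations that expected per-edge reuse preserves an $\Omega(1/t)$ fraction of the fractional savings. Balancing these regimes yields the claimed $O(m^{1/2+\epsilon})$ loss.

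\textbf{Running time and main obstacle.} The LP has $O(|P|m)$ variables and is solved once; with interior-point methods this costs $O((|P|m)^{2.38})$, while the outer enumeration over $O((\log m)/\epsilon)$ guesses of $\tau$ and $m$ candidate heavy edges contributes only polynomial overhead, giving the stated $O((|P|m)^{2.38}(1/\epsilon))$ bound. The main technical obstacle is that the objective is a \emph{difference} $U-c(G^*)$ rather than a ratio: a rounding scheme that loses a $(1+\delta)$ factor on $\sum_e c(e)y_e$ could entirely wipe out the savings when $U$ and $\sum_e c(e)y_e$ are close. Overcoming this requires analyzing the rounded solution directly in terms of \emph{coincidences} of sampled paths on the same edges, rather than bounding $c(G^*)$ from above; in particular one must lower-bound the collision probability by $\Omega(f_e^i f_e^j/t)$ on any edge whose integral reuse is at most $t$, which in turn relies on decomposing the LP flow on $G_i$ into a distribution over shortest paths. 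Establishing this decomposition-based collision bound, and verifying that the two regimes genuinely cover every possible optimum, is the crux of the argument.
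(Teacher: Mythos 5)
Your LP formulation, the path-decomposition/randomized-rounding idea, and the overall two-regime dichotomy are genuinely the same skeleton as the paper's, and you correctly identify the crux: the objective is a difference, so the rounding must be analyzed through per-edge coincidences rather than by bounding $c(G^*)$ multiplicatively. However, both regimes of your rounding have gaps that the paper's additional machinery exists precisely to close.

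In the heavy regime, enumerating a single edge $e$ with integral reuse $k_e \ge t$ and forcing it into every compatible path only guarantees savings on that one edge, i.e.\ about $\max_e (k_e-1)c(e)$. The optimum's savings can be spread over up to $m$ distinct heavy edges, and forcing one edge into all compatible paths gives no control over whether those paths coincide anywhere else; so this regime only yields an $O(m)$-factor guarantee, not $O(m^{1/2+\epsilon})$. The paper instead defines thickness by $v_e$, the number of local shortest-path DAGs $G_{(s,t)}$ containing $e$ (an instance property, with $v_e \ge |P|/m^{1/2+\epsilon}$ for thick edges), and computes a maximum-weight \emph{path} $p_{(s^*,t^*)}$ in a local DAG under weights $(v_e-1)c(e)$. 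An averaging argument over the $|P|$ pairs shows this single path already carries weight at least $z_1^{opt}/m^{1/2+2\epsilon}$, and routing every other pair through its (connected) overlap with this centric path recovers that amount. The whole-path structure is what beats the $O(m)$ barrier of single-edge enumeration.

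In the light regime, the claimed $\Omega(1/t)$ collision bound does not follow from bounded integral reuse. Under independent rounding the expected savings on $e$ is $\sum_i f_e^i - \bigl(1-\prod_i(1-f_e^i)\bigr)$, which the paper lower-bounds by $\bigl(\max_i f_e^i\bigr)\sum_{j\ne i^*} f_e^j$; to compare this with the fractional savings $\sum_i f_e^i - \max_i f_e^i$ one needs a lower bound on $\max_i f_e^i$, and an LP optimum may split each unit of flow into arbitrarily small pieces. The paper repairs this by averaging the LP optimum with an explicit spread-out feasible flow that places at least $1/(2\sqrt{m})$ on every relevant edge --- which is only possible for pairs whose local DAG contains at most $\sqrt{m}$ thin edges. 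That restriction forces a third case (instances dominated by pairs with many thin edges), handled by deleting such pairs and iterating; a pigeonhole count shows $|P|$ shrinks by a factor $m^{\epsilon}$ per iteration, so only $O(1/\epsilon)$ rounds occur. Your proposal contains neither the flow-mixing step nor this residual iterative case, so as written the argument does not close.
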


\begin{proof}
See Section \ref{sec:appxalg}.
\end{proof}

On the other hand, we prove CSPDP is at least as hard as $\text{LABEL-COVER}_{\max}$. This implies that CSPDP cannot be approximated within $O(|E|^{1/6-\epsilon})$ factor in polynomial time, unless there is an improvement on the notoriously difficult $\text{LABEL-COVER}_{\max}$.

\begin{theorem}\label{thm:hardnessfinal}
For any constant $\epsilon > 0$, there no polynomial-time approximation algorithm for CSPDP achieving a ratio of $2^{\log^{1-\epsilon}m^{1/2}}$, for any $0< \epsilon< 1$, unless $NP \subseteq DTIME(n^{polylog(n)})$.
\end{theorem}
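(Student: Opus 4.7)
The plan is to prove Theorem~\ref{thm:hardnessfinal} by an approximation-preserving reduction from the MAX REP variant of LABEL-COVER$_{\max}$, whose inapproximability hypothesis is that no polynomial-time algorithm can approximate it within a factor of $2^{\log^{1-\epsilon} N}$ unless $NP \subseteq DTIME(n^{polylog(n)})$, where $N$ denotes the input size. Recall that a MAX REP instance provides a bipartite graph whose two vertex sides are partitioned into $k$ groups, along with a collection of ``super-edges'' between pairs of groups; the goal is to select one representative per group so as to maximize the number of super-edges that contain an edge between the two chosen representatives.

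Given such a MAX REP instance, I would build the CSPDP graph $G$ as follows. For every group $A$ introduce a source vertex $s_A$ and a sink vertex $t_A$, and for every vertex $a \in A$ install a private path $P_a$ from $s_A$ to $t_A$ of length exactly $D$. For every MAX REP edge $(a,b)$ with $a$ in group $A$ and $b$ in group $B$ (where $(A,B)$ is a super-edge), add a connector path of length $D$ between $t_A$ and $t_B$. Finally, set the CSPDP demand set to $P = \{(s_A, s_B) : (A,B) \text{ is a super-edge}\}$, so that $dist_G(s_A, s_B) = 3D$ for every demand pair. For completeness, a MAX REP labelling $\rho$ satisfying $\tau$ super-edges is realised in CSPDP by keeping only the paths $P_{\rho(A)}$ together with one connector per satisfied super-edge, while unsatisfied super-edges force an extra path-plus-connector of total length $2D$; the resulting savings $U - c(G^*)$ scale linearly with $\tau$. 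For soundness, any CSPDP subgraph with large savings must reuse a single internal path $P_a$ inside each group across all demand pairs meeting it---any extra parallel path inside the same group costs an additional $D$ that wipes out the gain---so the reused paths can be read off as a consistent MAX REP labelling whose number of satisfied super-edges is proportional to the recorded saving.

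The main obstacle will be calibrating the length $D$, the group sizes, and the parallel-path structure so that (i) mixing two distinct representatives inside a single group cannot be profitably amortised across super-edges, ruling out hybrid labellings in the soundness direction, and (ii) the polynomial blowup from MAX REP size $N$ to CSPDP edge count $m$ produces exactly the exponent $m^{1/2}$ in the final hardness bound. With the natural parameters one obtains $m = \Theta(N^2)$, hence $N = \Theta(m^{1/2})$, and substitution into the MAX REP hardness factor $2^{\log^{1-\epsilon} N}$ yields precisely the claimed inapproximability ratio $2^{\log^{1-\epsilon} m^{1/2}}$ for CSPDP under the assumption $NP \not\subseteq DTIME(n^{polylog(n)})$. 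The remaining verification---that the construction is polynomial in $N$ and that the gap between completeness and soundness survives the reduction---is then routine.
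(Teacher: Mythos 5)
There is a genuine gap in the reduction, and it is fatal as written: your construction does not couple the choice of representative inside a group to the coverage of super-edges. You attach every connector path for a MAX REP edge $(a,b)$ between the \emph{group} terminals $t_A$ and $t_B$, which are shared by all private paths $P_{a'}$, $a' \in A$. Consequently a demand pair $(s_A,s_B)$ can realize the distance $3D$ by concatenating \emph{any} $P_{a'}$, \emph{any} connector incident to $(t_A,t_B)$, and \emph{any} $P_{b'}$ --- the existence of a length-$3D$ route depends only on the super-edge $(A,B)$ existing, not on whether the chosen representatives are adjacent in the bipartite graph. The savings $U - c(G^*)$ then come solely from reusing one $P_a$ per group across the demand pairs incident to that group, an amount that is the same for every labelling; your claim that ``unsatisfied super-edges force an extra path-plus-connector of total length $2D$'' does not hold, so the soundness direction collapses and the optimum of the CSPDP instance carries no information about the MAX REP optimum. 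To repair this you must make the savings edge-specific: the paper does so by giving each $A$-side representative a long path $p_v$ with alternating heavy and unit edges, and by routing the demand pair for each $B$-side group \emph{through the interiors} of exactly those paths $p_v$ with $v$ adjacent to the chosen $u \in B_j$, so that a selected $p_{v(i)}$ and a selected $q_{u(j)}$ share a single unit-weight edge if and only if $v(i)$ and $u(j)$ are adjacent; the total saving then equals the number of covered super-edges. Note also that the paper uses only $2k$ demand pairs (one per group on each side), whereas you use one per super-edge, which is not itself an error but reflects that your gadget has no per-representative interaction to exploit.

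A secondary, smaller point: even if the coupling were fixed, your soundness sketch (``any extra parallel path inside the same group costs an additional $D$ that wipes out the gain'') needs the weights calibrated so that the penalty $D$ exceeds the maximum saving obtainable from splitting a group's demand pairs between two representatives, which grows with the super-degree of the group; this is exactly the calibration the paper performs with the quantities $u_{\max}=3d_{\max}$ and $u_{b_i}$, and it cannot be dismissed as routine.
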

\begin{proof}
See Section \ref{sec:hardness}.
\end{proof}

The rest of the paper is organized as follows: Section \ref{sec:probover} provides a reduction from the undirected version of CSPDP into its directed version. Section \ref{sec:algov} provides an overview of the CSPDP problem as well as a description for the algorithms devised in this paper. Afterwards, Section \ref{sec:appxalg} analyzes the algorithms both from the approximation performance and computational complexity point of view. Finally, Section \ref{sec:hardness} presents the hardness result of CSPDP for both directed and undirected versions of the problem.

\section{Problem overview}\label{sec:probover}

\noindent In this section, we begin with reducing the undirected CSPDP into its dircted CSPDP. Then, we devise a natural integer programming formulation for CSPDP that will be used in our approximation algorithm. Observe that we cannot reduce the undirected CSPDP into its directed counterpart by simply substituting each edge $e=(v_i,v_j)$ with two directed edges $\vec{e}_1=(v_i,v_j)$ and $\vec{e}_2=(v_j,v_i)$. The following lemma describes a reduction from the undirected version into its directed version.

\begin{lemma}\label{lemma:redundi}
For any instance of CSPDP with pair set $P$ in an undirected weighted graph $G$ with $n$ vertices and $m$ edges, we can construct an instance of CSPDP with the same pair set $P$ in a directed weighted graph $R$ with $n+2m$ vertices and $5m$ edges and the same objective value.

\end{lemma}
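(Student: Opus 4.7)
The plan is to replace each undirected edge of $G$ by a small directed gadget that preserves pairwise distances between vertices of $V$ while letting both orientations ``share'' the same cost. Concretely, for every $e = (u,v) \in E$ I would introduce two fresh vertices $a_e, b_e$ together with five directed arcs: two in-arcs $(u,a_e)$ and $(v,a_e)$, a middle arc $(a_e, b_e)$, and two out-arcs $(b_e, u)$ and $(b_e, v)$. The middle arc carries weight $c(e)$ and the four outer arcs carry weight $0$ (if strict positivity is required, perturbing the zeros by an $\varepsilon$ smaller than any gap between distinct shortest-path weights does not disturb the combinatorial structure). The resulting digraph $R$ has exactly $n + 2m$ vertices and $5m$ arcs, matching the claim.

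Next I would verify that $dist_R(x,y) = dist_G(x,y)$ for every $x, y \in V$. The ``$\le$'' direction lifts any $G$-path $x = x_0, x_1, \dots, x_k = y$ with edges $e_1, \dots, e_k$ to the $R$-path $x_0 \to a_{e_1} \to b_{e_1} \to x_1 \to a_{e_2} \to b_{e_2} \to x_2 \to \cdots \to x_k$, of identical total weight $\sum_i c(e_i)$. For the reverse inequality, the only outgoing arc from any $a_e$ goes to $b_e$, so once an $R$-walk leaves a vertex of $V$ it must cross $(a_e, b_e)$ before returning to $V$; the walk therefore decomposes into gadget traversals, each contributing exactly $c(e_i)$, and projects onto a walk in $G$ of the same total weight. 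In particular, the trivial upper bound $U = \sum_{(u,v) \in P} dist_G(u,v)$ is preserved by the reduction.

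Then comes the weight-preserving correspondence between feasible solutions. Given an undirected distance preserver $G^* \subseteq G$, the subgraph $R^*$ formed by taking all five gadget arcs of each $e \in E(G^*)$ satisfies $c(R^*) = c(G^*)$ because only middle arcs carry weight, and the lifting of any shortest $G^*$-path certifies that $R^*$ is a feasible CSPDP solution on $R$. Conversely, given a feasible $R^* \subseteq R$, define $E(G^*) = \{ e : (a_e, b_e) \in E(R^*) \}$; again $c(G^*) = c(R^*)$ since only middle arcs contribute, and projecting any shortest $u$-to-$v$ walk in $R^*$ produces a walk of weight $dist_G(u,v)$ that uses only edges of $G^*$, so $G^*$ is a feasible distance preserver. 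Combining with $U_G = U_R$, the CSPDP objectives $U - c(G^*)$ and $U - c(R^*)$ coincide on corresponding solutions, hence so do the optima.

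The main subtlety I anticipate is in the projection step: I need to argue that whenever the optimal $R^*$-walk between some pair in $P$ crosses the gadget for $e$, the middle arc $(a_e, b_e)$ genuinely belongs to $R^*$, so that $e$ is correctly included in $E(G^*)$. The gadget's ``bottleneck'' structure (the unique outgoing arc at $a_e$) makes this free, and it is precisely this feature that prevents the naive two-arc substitution from working and that makes the five-arc gadget both distance-preserving and cost-preserving across the directed/undirected boundary.
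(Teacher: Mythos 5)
Your gadget is exactly the paper's construction (the paper's $v'_{ij}, v''_{ij}$ are your $a_e, b_e$, with the same five arcs and the same weight assignment), and your two-way correspondence between feasible solutions matches the paper's argument. Your proposal is correct and in fact spells out the distance-preservation and projection steps in more detail than the paper does.
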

\begin{proof}
Given an undirected graph $G$, delete each edge $e=(v_i,v_j) \in E$. Then, add two vertices $v'_{ij}$ and $v^{''}_{ij}$, and five directed edges namely, $(v_i,v'_{ij}), (v_j,v'_{ij}), (v^{''}_{ij},v_i), (v^{''}_{ij},v_j) $ with weight 0, and $(v'_{ij},v^{''}_{ij})$ with weight $c_{v_i,v_j}$. Denote by graph $R=(V_1,E_1)$ the new graph obtained after applying the same procedure on all edges $e\in E$. Figure \ref{fig:reduction} describes the procedure applied on an arbitrary edge $(v_i,v_j) \in E$. 

We prove that finding the optimal solution $G^*$ for CSPDP in undirected graph $G$ with demand pairs $P$ is equivalent to finding the optimal solution $R^*$ for CSPDP in directed graph $R$ with a directed version of $P$. As $R$ is a directed graph, for each demand pair $(s,t) \in P$ we consider one of them as the origin and the other one as the destination.

First, note that the distance between pairs of nodes $(s,t) \in V \times V$ in $G$ will be preserved in the graph $R$. As such, in both $G$ and $R$ the trivial upper bound $U$ takes the same value.

For any feasible solution $G^*$ for CSPDP in undirected graph $G$, we find a feasible solution for CSPDP in directed graph $R$ with the same objective value. To reach this goal, for all edges $e=(v_i,v_j) \in G^*$, we take the union of the edges $(v_i,v'_{ij}), (v_j,v'_{ij}), (v^{''}_{ij},v_i), (v^{''}_{ij},v_j)$ and $(v'_{ij},v^{''}_{ij})$ to form the subgraph $R^*$. Conversely, for any solution $R^*$ for CSPDP in directed graph $R$, we find a feasible solution for CSPDP in the original undirected graph $G$ with the same objective value. To do so, take the union of all edges $e=(v_i,v_j) \in E$  such that $(v'_{ij},v^{''}_{ij}) \in E_1$.

\begin{figure}
    \centering
    \includegraphics[width=0.6\linewidth]{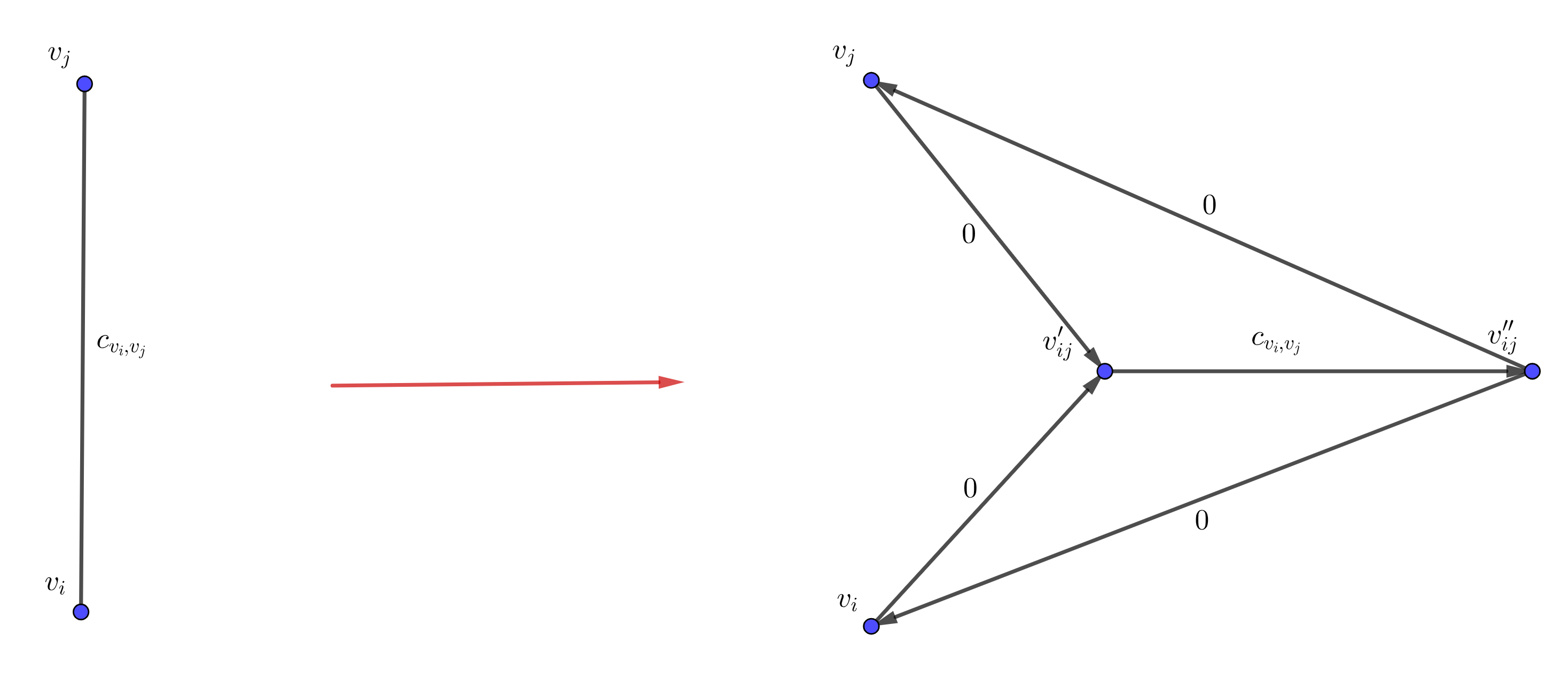}%
    \caption{reduction}
    \label{fig:reduction}
\end{figure}

\end{proof}

Given the results in Lemma \ref{lemma:redundi}, it is sufficient for us to propose our approximation algorithm for the directed version of the problem.

Consider a directed graph $G=(V,E)$, and an assignment of weights to the edges $c : E \rightarrow R^{+}$. For any pair $(s,t) \in P$, let $Q_{(s,t)}$ denote the collection of all directed shortest paths in $G$ from $s$ to $t$, and define the local graph $G_{(s,t)} =(V_{(s,t)}, E_{(s,t)})$ as the union of all nodes and edges in $Q_{(s,t)}$.

Consider an optimal solution $H^*$ to CSPDP, for each pair of nodes $(s,t) \in P$ consider a shortest path $p_{s,t}$ in the subgraph $H^*$ connecting the pair of nodes $(s,t)$. Optimality of $H^*$ implies that each edge with positive weight in $H^*$ appears in at least one of these selected paths.

Now, for each edge $e \in H^*$ denote the number of shortest paths $p_{(s,t)}: (s,t) \in P$ containing edge $e$ by $u_e$; it is trivial that the upper bound $U$ can be decomposed into the following summation: 
\begin{equation*}
    U= \sum_{(u,v) \in P} dist_{G} (u, w)= \sum_{e \in H} u_e c(e)
\end{equation*}
As such, we can write the objective in CSPDP as:
\begin{equation*}
    U-c(G')=\sum_{e \in H} (u_e-1)c(e)
\end{equation*}

We are now ready to formulate the CSPDP problem as the following integer programming:

\begin{subequations}\label{opt:pdps}
\begin{align}\label{obj:pdps}
\mbox{Max} \quad & z_1 = \sum_{e\in E} ( \sum_{(s,t):e\in E_{(s,t)}} x^{(s,t)}_e -y_e)c(e) \\ \label{conspdps:assign}
\mbox{s.t.} \quad & \sum_{\substack{ v_i: \\ e=(v_i,v)\in E(s,t)}} x_e^{(s,t)}- \sum_{\substack{ v_j: \\ e=(v,v_j)\in E(s,t)}} x_e^{(s,t)}=d_v^{(s,t)} && \forall v \in E, \forall (s,t) \in P
\\   \label{cons1:vector}
& x_e^{(s,t)} \leq y_e   &&\scriptstyle{ \forall e \in E(G), \forall (s,t): e \in E_{(s,t)} }
\\  \label{conspdps:integ}
& x_{e}^{(s,t)} \in \left \{0,1  \right \}, &&\substack{\forall (s,t)\in P, \\ \forall e \in E(s,t)}
\\  \label{conspdps:integaux}
& y_e \in \left \{0,1  \right \}, &&{\scriptstyle \forall e \in E}
\end{align}
\end{subequations}
Where $d_v^{(s,t)}$ is defined as follows:
\begin{equation*}
    d_v^{(s,t)}=   \left\{\begin{matrix}
-1 \quad \textnormal{If  } \textit{  s=v}\\ 
\ \ 1\quad \textnormal{If   } \textit{ t=v}\\
0 \quad \textrm{Otherwise}
\end{matrix}\right.
\end{equation*}

Let us define savings for an edge $e$ as $( \sum_{(s,t):e\in E_{(s,t)}} x^{(s,t)}_e -y_e)$. The objective \eqref{obj:pdps} is the summation of the savings in all edges of $G$. Constraint \eqref{conspdps:assign} ensures that there exists a shortest path from $s$ to $t$ for each pair $(s,t) \in P$. The binary decision variable $x^{(s,t)}_e$ takes value 1 if the directed shortest path for pair $(s,t)$ contains edge $e$ and  0 otherwise. Constraint \eqref{cons1:vector}  defines $y_l$ as a binary variable that is equal to 1 if the edge $e$ is contained by at least one of the directed shortest paths $p_{(s,t)}$ in a feasible solution.

\section{Algorithm}\label{sec:algov}

\noindent In this section, we propose an $O(m^{1/2+\epsilon})$-approximation Algorithm for CSPDPD problem for any $\epsilon>0$. We save the analysis for Section \ref{sec:appxalg}. 


\begin{definition} \label{def:thickedge}
Let $v_e$ be the number of local graphs $G_{(s,t)}$ containing edge $e$. We say an edge $e \in E(G)$ is thick if $v_e \geq \frac{|P|}{m^{1/2 +\epsilon}}$. We say an edge is thin otherwise. Moreover, denote by $E_{TK}$ the set of thick edges and by $E_{TN}$ the set of thin edges.
\end{definition}

Let us consider an optimal solution $(x^{opt},y^{opt})$ to problem \eqref{opt:pdps}. Then, for the objective \eqref{obj:pdps}, we have two possible cases depending on whether condition \eqref{condition:main} holds or not:

\begin{equation}\label{condition:main}
    \sum_{e\in E_{TK}} ( \sum_{(s,t):e\in E_{(s,t)}} x^{opt,(s,t)}_e -y^{opt}_e)c(e)>\frac{1}{m^{\epsilon}} \sum_{e\in E} ( \sum_{(s,t):e\in E_{(s,t)}} x^{opt,(s,t)}_e -y^{opt}_e)c(e)
\end{equation}

\begin{definition}
We say an instance $(G,P)$ is thick-dominant if condition \eqref{condition:main} holds for at least one optimal solution. We call that optimal solution a thick-dominant solution. We say an instance is thin-dominant otherwise. 
\end{definition}

We take two different approaches to approximate the optimal solution for thick-dominant and thin-dominant instances of CSPDP. Specifically, we approximate the optimal solution of a thick-dominant instance by forcing all paths to have an overlap with a centric path, while, for a thin-dominant instance, we approximate its optimal solution with feasible solutions that consist of shortest paths whose overlaps are almost evenly distributed over the edges of $G$.

For thick-dominant instances, we propose the dynamic-programming-based solution in algorithm \ref{alg:dyn} to solve the problem. The core idea of algorithm \ref{alg:dyn} is to find a demand pair $(s^*,t^*)$ and a shortest path $p_{(s^*,t^*)}$ from $s^*$ to $t^*$ that will maximize the objective \eqref{obj:pdps} when we force every other demand pair $(s',t')$ to pass through the overlap of its restricted subgraph $G_{(s',t')}$ and path $p_{(s^*,t^*)}$.


\vspace{0.5cm}
\begin{algorithm}[H]
    \caption{Dynamic programming Algorithm for thick-dominant instances}
    \label{alg:dyn}
\vspace{0.5cm}
\begin{algorithmic}

 \STATE 1. For each demand pair $(s,t)\in P$, consider a weighted copy of graph $G_{(s,t)}$ and denote it by $H_{(s,t)}$.
 \STATE 2. Set a weight of $w(e)=(v_e-1)c(e)$ for each thick edge $e \in H_{(s,t)}$ and a weight of zero for each thin edge $e \in H_{(s,t)}$.
 \STATE 3. Find the maximum weighted path, using dynamic programming, for each pair of nodes $(s,t) \in P$ in their corresponding local graph $H_{(s,t)}$. Denote the longest path by $p_{(s,t)}$ and its weight by $a_{(s,t)}=\sum_{e \in p_{(s,t)}}w(e)$. 
 \STATE 4. Denote the maximum weighted path in set $\left \{  p_{(s,t)}: (s,t) \in P \right \}$ by $p_{(s^*,t^*)}$ and its weight by $a^*=\sum_{e \in p_{(s^*,t^*)}}w(e)$.
 \STATE 5. For each demand pair $(s,t)\in P$, consider a new weighted copy of graph $H_{(s,t)}$ and denote it by $H^*_{(s,t)}$. 
 \STATE 6. Set a weight of $w(e)=c(e)$ for each edge $e \in H^*_{(s,t)}, e \in p_{(s^*,t^*)}$ and a weight of zero for each edge $e \in H^*_{(s,t)}, e \notin p_{(s^*,t^*)}$.
 \STATE 7. Find the maximum weighted path, using dynamic programming, for each pair of nodes $(s,t) \in P$ in their corresponding local graph $H^{*}_{(s,t)}$ and denote it by $h_{(s,t)}$.
 \STATE 8. Denote by $H$ the subgraph obtained by taking the union of all the paths $h_{(s,t)}$.
 \STATE 9. Output $H$ and $c(H)$.
\end{algorithmic}
\end{algorithm}
\vspace{0.5cm}

In a thin-dominant instance, in the optimal solution the portion of the objective \eqref{obj:pdps} from the thick edges are negligible. On the other hand, each thin edge is on the overlap of a few number of edges $v_e$. This leads to a poor performance of algorithm \ref{alg:dyn}. To address this issue, we propose a linear-programming-based randomized algorithm \ref{alg:lprand}.

Focusing on the thin edges, let us modify the objective function in the optimization problem \eqref{opt:pdps} by merely including the savings from thin edges to formulate the problem \eqref{obj:pdpsthin} as follows:

\begin{subequations}\label{opt:pdpsthin}
\begin{align}\label{obj:pdpsthin}
\mbox{Max} \quad & z^{TN}_1 = \sum_{e\in E_{TN}} ( \sum_{(s,t):e\in E_{(s,t)}} x^{(s,t)}_e -y_e)c(e) \\ \label{conspdps:assignthin}
\mbox{s.t.} \quad & \sum_{\substack{ v_i: \\ e=(v_i,v)\in E(s,t)}} x_e^{(s,t)}- \sum_{\substack{ v_j: \\ e=(v,v_j)\in E(s,t)}} x_e^{(s,t)}=d_v^{(s,t)} && \forall v \in E, \forall (s,t) \in P
\\   \label{cons1:vectorthin}
& x_e^{(s,t)} \leq y_e   &&\scriptstyle{ \forall e \in E(G), \forall (s,t): e \in E_{(s,t)} }
\\  \label{conspdps:integthin}
& x_{e}^{(s,t)} \in \left \{0,1  \right \}, &&\substack{\forall (s,t)\in P, \\ \forall e \in E(s,t)}
\\  \label{conspdps:integauxthin}
& y_e \in \left \{0,1  \right \}, &&{\scriptstyle \forall e \in E}
\end{align}
\end{subequations}

Consider an optimal solution to problem \eqref{opt:pdpsthin} and denote it by $x^{TN}, y^{TN}$. Note that, if the instance $(G,P)$ is thin-dominant, the optimal objective value of problem \eqref{opt:pdpsthin} $z_1^{opt,TN}$ will be at least $(1-\frac{1}{m^{\epsilon}}) z_1^{opt}$, because $(x^{opt},y^{opt})$ is a feasible solution for problem \eqref{opt:pdpsthin} and also thin-dominance yields:
\begin{equation*}
    \sum_{e\in E_{TK}} ( \sum_{(s,t):e\in E_{(s,t)}} x^{opt,(s,t)}_e -y^{opt}_e)c(e)<\frac{1}{m^{\epsilon}} \sum_{e\in E} ( \sum_{(s,t):e\in E_{(s,t)}} x^{opt,(s,t)}_e -y^{opt}_e)c(e)
\end{equation*}

\begin{definition}\label{def:light}
Let us denote by $b_{(s,t)}$ the number of thin edges in the local graph $E_{(s,t)}$. We call a thin-dominant instance $(G,P)$ light if the portion of total savings \eqref{obj:pdpsthin} in the optimization problem \eqref{opt:pdpsthin} at $x^{TN}, y^{TN}$ in the absence of all demand pairs $(s,t)\in P$ with $b_{(s,t)} < \sqrt{m}$ decreases to less than $(1-\frac{1}{m^{\epsilon}})z_1^{opt,TN}$, where $z_1^{opt,TN}$ is the solution to the problem \eqref{opt:pdpsthin} for instance $(G,P)$, i.e.

\begin{equation*}
    \sum_{e\in E_{TN}} ( (\sum_{\substack{(s,t):e\in E_{(s,t)}\\ b_{(s,t)} > \sqrt{m}}} x^{TN,(s,t)}_e) - \max_{\substack{(s,t):e\in E_{(s,t)}\\ b_{(s,t)} > \sqrt{m}}} x^{TN,(s,t)}_e)c(e) \leq (1-\frac{1}{m^{\epsilon}}) \sum_{e\in E_{TN}} (( \sum_{(s,t):e\in E_{(s,t)}} x^{TN,(s,t)}_e) -y^{TN}_e)c(e)
\end{equation*}

Here, the $\max$ function over the empty set outputs $0$. We call a thin-dominant instance heavy, otherwise.
\end{definition}

We propose a linear relaxation of problem \eqref{opt:pdpsthin} in which we relax the binary constraints \eqref{conspdps:integ} and \eqref{conspdps:integaux}. We also restrict the objective function to the edges that are contained by at least one demand pair $(s,t)$ with $b_{(s,t)}< \sqrt{m}$. As such, we can provide the LP relaxation of the problem as:
\begin{subequations}\label{opt:pdpsrelax}
    \begin{align}\label{obj:pdpsrelax}
    \mbox{Max} \quad & z_2 =\sum_{\substack{e:\exists (i,j)| b_{(i,j)}\leq \sqrt{m} \\ e \in E_{(i,j)} \\ e \in E_{TN}} } ( \sum_{(s,t):e\in E_{(s,t)}} x^{(s,t)}_e -y_e)c(e) \\ \label{conspdps:assignrelax}
    \mbox{s.t.} \quad & \sum_{\substack{ v_i: \\ e=(v_i,v)\in E_{(s,t)}}} x_e^{(s,t)}- \sum_{\substack{ v_j: \\ e=(v,v_j)\in E_{(s,t)}}} x_e^{(s,t)}=d_v^{(s,t)} && \forall v_i \in V, \forall (s,t) \in P
    \\   \label{cons1:vectorrelax}
    & x_e^{(s,t)} \leq y_e   &&\scriptstyle{ \forall e \in E, \forall (s,t): e \in E_{(s,t)} }
    \end{align}
\end{subequations}

Consider an optimal solution to problem \eqref{opt:pdpsrelax} and denote it by $(x^*,y^*)$.  

\begin{lemma}\label{rem:discon}
If a thin-dominant instance $(G,P)$ is light, the optimal objective value in optimization problem \eqref{opt:pdpsrelax} is at least $\frac{1}{m^{\epsilon}} z^{TN}_1(x^{TN},y^{TN})$, In other words :
\begin{equation}\label{observ:minflow}
    \sum_{\substack{e:\exists (i,j)| b_{(i,j)}\leq \sqrt{m} \\ e \in E_{(i,j)} \\ e \in E_{TN}} }  ( (\sum_{(s,t):e\in E_{(s,t)}} x^{*,(s,t)}_e) -y^*_e)c(e) \geq \frac{1}{m^{\epsilon}} z_1^{TN}(x^{TN},y^{TN})
\end{equation}
\end{lemma}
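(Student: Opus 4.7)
The plan is to use $(x^{TN},y^{TN})$ — the optimal integer solution of \eqref{opt:pdpsthin} — as a feasible point of the LP relaxation \eqref{opt:pdpsrelax}, and bound the LP objective at this point from below by $\frac{1}{m^{\epsilon}} z_1^{TN}(x^{TN},y^{TN})$. Feasibility is immediate, because the flow constraints \eqref{conspdps:assignrelax} and coupling constraints \eqref{cons1:vectorrelax} coincide with those in \eqref{opt:pdpsthin} and only integrality is relaxed. Since $(x^*,y^*)$ is LP-optimal, any lower bound on $z_2$ at $(x^{TN},y^{TN})$ propagates to $z_2(x^*,y^*)$.

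First I would rewrite the lightness hypothesis from Definition~\ref{def:light} as a lower bound on an edge-wise surplus. Write $S_{\text{all}}$ for the full thin-edge savings $z_1^{TN}(x^{TN},y^{TN})$ and $S_{\text{heavy}}$ for the left-hand side of the inequality in Definition~\ref{def:light}. Rearranging lightness gives $S_{\text{all}} - S_{\text{heavy}} \geq \tfrac{1}{m^{\epsilon}}\, z_1^{TN}(x^{TN},y^{TN})$. The remaining task reduces to showing, edge by edge, that the contribution of $e$ to $z_2(x^{TN},y^{TN})$ dominates its contribution to $S_{\text{all}}-S_{\text{heavy}}$.

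I would then partition the thin edges into \emph{light-covered} edges (those for which some pair $(i,j)$ with $b_{(i,j)}\leq\sqrt{m}$ satisfies $e\in E_{(i,j)}$ — precisely the edges summed in $z_2$) and \emph{heavy-only} edges. On a heavy-only edge, no light pair routes through $e$, so $\sum_{\text{light through }e} x^{TN,(s,t)}_e = 0$; combined with the standard observation that any IP-optimum satisfies $y^{TN}_e = \max_{(s,t):\, e\in E_{(s,t)}} x^{TN,(s,t)}_e$ (otherwise one could decrease $y^{TN}_e$ and strictly improve the objective), both the LP contribution and the $S_{\text{all}}-S_{\text{heavy}}$ contribution vanish. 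On a light-covered edge a one-line algebraic comparison yields
\[
\text{(LP-contribution of }e\text{)} \;-\; \text{(}S_{\text{all}}-S_{\text{heavy}}\text{)-contribution of }e \;=\; \Bigl(\sum_{\text{heavy through }e} x^{TN,(s,t)}_e \;-\; \max_{\text{heavy through }e} x^{TN,(s,t)}_e\Bigr)\, c(e) \;\geq\; 0.
\]
Summing then yields $z_2(x^{TN},y^{TN}) \geq S_{\text{all}}-S_{\text{heavy}} \geq \tfrac{1}{m^{\epsilon}} z_1^{TN}(x^{TN},y^{TN})$, and LP-optimality of $(x^*,y^*)$ closes the argument.

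The piece that deserves most care is the IP-optimality observation $y^{TN}_e = \max_{(s,t)} x^{TN,(s,t)}_e$ together with the empty-max convention from Definition~\ref{def:light}, so that heavy-only edges really do contribute zero on both sides. I do not expect a substantive obstacle, since the LP objective was deliberately built to capture exactly the savings accruing to pairs with small thin-edge support, which is the quantity that the lightness hypothesis lower-bounds.
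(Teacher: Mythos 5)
Your proposal is correct and follows essentially the same route as the paper's proof in Appendix I: both use $(x^{TN},y^{TN})$ as a feasible LP point, split the thin edges according to whether they lie in some $E_{(i,j)}$ with $b_{(i,j)}\leq\sqrt{m}$, invoke the IP-optimality fact $y^{TN}_e=\max_{(s,t)}x^{TN,(s,t)}_e$, and bound the remaining (heavy-only) part via the lightness inequality. The only difference is presentational — you compare contributions edge by edge while the paper chains inequalities at the level of the full sums — so there is nothing substantive to add.
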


\begin{proof}
See Appendix \ref{sec:app1}.
\end{proof}

\begin{lemma}\label{thm:main}
Given an optimal solution, $(x^*,y^*)$, to problem \eqref{opt:pdpsrelax}, using at most $O(|P| m\sqrt{m})$ operations, we can find another feasible solution $(\vec{x}^2,\vec{y}^2)$ fot problem \eqref{opt:pdpsrelax} whose savings on each thin edge $e$ that is contained by at least one of the local graphs $G_{(s,t)}$ where $b_{(s,t)} < \sqrt{m}$ satisfies the following two conditions:

\begin{equation*}
( \sum_{(s,t):e\in E_{(s,t)}} x^{2,(s,t)}_e -y^2_e)c(e) \geq \frac{1}{2} ( \sum_{(s,t):e\in E_{(s,t)}} x^{*,(s,t)}_e -y^*_e)c(e)
\end{equation*}
\begin{equation*}
    \max_{(s,t):e\in E_{(s,t)}}x^{2,(s,t)}_e \geq \frac{1}{2 m^{1/2}}
\end{equation*} 
\end{lemma}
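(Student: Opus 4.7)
The plan is to construct $(\vec{x}^2,\vec{y}^2)$ by separately modifying the unit flow $x^{*,(s,t)}$ for each demand pair with $b_{(s,t)}<\sqrt{m}$, leaving the flows of pairs with $b_{(s,t)}\ge\sqrt{m}$ untouched, and then setting $y^2_e=\max_{(s,t):\,e\in E_{(s,t)}}x^{2,(s,t)}_e$ to obtain feasibility for \eqref{opt:pdpsrelax}. Because the two conditions in the statement are local to each thin edge, the analysis can be carried out edge by edge.

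For a pair with $b_{(s,t)}<\sqrt{m}$, the local graph $G_{(s,t)}$ contains at most $\sqrt{m}$ thin edges, which caps the number of per-pair reroutings at $O(\sqrt{m})$. Starting from a path decomposition of $x^{*,(s,t)}$, I would iterate over the thin edges $e\in E_{(s,t)}$ with $0<x^{*,(s,t)}_e<\tfrac{1}{2\sqrt{m}}$ and, for each, make one of two moves: either reroute the fractional mass through $e$ along an alternative shortest $s$-$t$ path that avoids $e$ (erasing $e$ from this pair's flow), or top up the flow through $e$ along a fixed shortest $s$-$t$ path containing $e$ until $x^{2,(s,t)}_e$ reaches $\tfrac{1}{2\sqrt{m}}$ (saturating $e$). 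Because at most $\sqrt{m}$ thin edges receive attention, the total mass shifted per pair is $O(1)$. Finally, for a relevant thin edge $e$ not touched by any surviving flow of a small-$b$ pair, I would separately route a $\tfrac{1}{2\sqrt{m}}$-fraction of the flow of some pair $(s_0,t_0)$ with $e\in E_{(s_0,t_0)}$ and $b_{(s_0,t_0)}<\sqrt{m}$ along a shortest $s_0$-$t_0$ path through $e$, which exists by the definition of $E_{(s_0,t_0)}$.

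This construction immediately gives condition (ii) on every relevant thin edge. For condition (i), the per-edge LP quantity $\sum_{(s,t)}x^{*,(s,t)}_e-\max_{(s,t)}x^{*,(s,t)}_e$ shifts under the reroute/top-up moves by a controlled amount: the sum changes only by the rerouted or augmented mass, and the maximum can grow by at most $\tfrac{1}{2\sqrt{m}}$ per top-up. The hard part will be scheduling the reroute-versus-top-up decisions so that, jointly across all relevant edges, the per-edge savings drop by at most a factor of two; in particular, edges where the LP savings are pieced together from many small per-pair contributions must be rounded consistently rather than pair by pair, and the hypothesis $b_{(s,t)}<\sqrt{m}$ must be used quantitatively to bound the number of pairs simultaneously perturbed at any single edge. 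The running-time bound $O(|P|m\sqrt{m})$ then follows from $O(\sqrt{m})$ reroute/top-up steps per pair, each costing $O(m)$ to locate the needed shortest path or alternative routing, summed over $|P|$ pairs.
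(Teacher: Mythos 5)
There is a genuine gap, and you have essentially flagged it yourself: the sentence beginning ``The hard part will be scheduling the reroute-versus-top-up decisions\dots'' is precisely the part of the proof that is missing, and it is the only part that matters. Condition (ii) is easy to arrange by many constructions; the entire content of the lemma is condition (i), and your sketch does not establish it. Worse, the reroute move is genuinely dangerous: when you erase the mass of pair $(s,t)$ passing through a thin edge $e$ and push it onto an alternative shortest path, you decrease $\sum_{(s,t)} x^{(s,t)}_{e'}$ at every edge $e'$ of the abandoned path while the max at $e'$ may be held by a different pair and thus not decrease, so the savings at $e'$ can drop by the full rerouted amount. Since a single pair's flow may be rerouted at up to $\sqrt{m}$ thin edges and each reroute perturbs up to $m$ other edges, there is no a priori control that keeps the aggregate loss within a factor of two, and no argument is offered.

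The paper avoids this scheduling problem entirely with a one-shot averaging argument. It builds a second feasible flow $\vec{x}^1$ from scratch: for each pair with $b_{(s,t)}<\sqrt{m}$, extend each of its $b_{(s,t)}$ thin edges to a shortest $s$--$t$ path inside the DAG $G_{(s,t)}$ and put flow $1/b_{(s,t)}$ on each such path (arbitrary unit paths for the remaining pairs), then sets $\vec{x}^2=\tfrac{1}{2}(\vec{x}^1+\vec{x}^*)$ with $y^2_e=\max_{(s,t)}x^{2,(s,t)}_e$. Condition (ii) is immediate because $x^{1,(s,t)}_e\ge 1/b_{(s,t)}\ge 1/\sqrt{m}$ on every relevant thin edge. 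Condition (i) follows because the per-edge savings $\bigl(\sum_{(s,t)}x^{(s,t)}_e-\max_{(s,t)}x^{(s,t)}_e\bigr)c(e)$ is a concave function of the flow vector (linear sum minus a convex max), so the savings of the midpoint is at least the average of the savings of $\vec{x}^*$ and $\vec{x}^1$, and the latter is nonnegative. If you want to salvage your local-modification strategy you would need to prove a global charging argument for the rerouted mass; the averaging route makes that unnecessary.
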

\begin{proof}
See Section \ref{sec:appxalg}.
\end{proof}
Here, we present an LP-based randomized approximation algorithm for light thin dominant instances.

\vspace{0.5cm}
\begin{algorithm}[H]
    \caption{LP based randomized algorithm}
    \label{alg:lprand}
\vspace{0.5cm}
\begin{algorithmic}
 
 \STATE 1. Find the thin edges $e$ that are contained by at least one of the local graphs $G_{(s,t)}$ where $b_{(s,t)} < \sqrt{m}$ and form optimization problem \eqref{opt:pdpsrelax}.
 
 \STATE 2. Find the optimal solution $(x^*,y^*)$ for problem \eqref{opt:pdpsrelax}.
 \STATE 3. Modify the solution $(x^*,y^*)$ according to the proof of Lemma \eqref{thm:main} to find the feasible solution $(\vec{x}^2,\vec{y}^2)$.
 \STATE 4. Use Algorithm \ref{alg:transfer} in Appendix \ref{sec:app4} to find an equivalent path-based solution for link-based solution $(\vec{x}^2,\vec{y}^2)$ to the optimization problem \eqref{opt:pdpsrelax} and denote it by $\vec{f}$ (as a vector whose elements are the variables $f^{(s,t)}_p$, where $f^{(s,t)}_p$ is the flow on path $p$ for pair of nodes $(s,t)$).
 \STATE 5. For each pair of nodes $(s,t)$, pick one of the shortest paths $p$ from $s$ to $t$ with probability $f^{(s,t)}_p$, denote their union by subgraph $H$.
 \STATE 6. Output $H$ and $c(H)$.
\end{algorithmic}
\end{algorithm}
\vspace{0.5cm}

The last remaining case is the case of heavy thin-dominant instances. Instead of directly devising an algorithm for this case, we take advantage of an iterative algorithm that solves the general CSPDP problem as follows:

\vspace{0.5cm}
\begin{algorithm}[H]
    \caption{Main Algorithm}
    \label{alg:Main}
\vspace{0.5cm}
\begin{algorithmic}

 \STATE 1. Set LB=0 and $OPT=\emptyset$.
 \STATE 2. Input instance $(G,P)$, set $P^0=P$.
 \STATE 3. Apply Algorithm 1, 
 \STATE 4. If $c(H)\geq LB$, then $LB=c(H)$, $OPT=H$. 
 \STATE 5. Apply Algorithm 2
 \STATE 6. If $c(H)\geq LB$, then $LB=c(H)$, $OPT=H$.
 \STATE 7. Delete all pairs of nodes $(s,t) \in P$ with $b_{(s,t)}<\sqrt{m}$ to obtain a new instance of node pairs $P^1$.
 \STATE 8. Set $(G,P)=(G,P^1)$ and go to step 2.
 \STATE 9. Denote by $D$ the set of demand pairs $(s,t) \in P^0$ that $dist_{H}(s,t) > dist_{G}(s,t)$. For any demand pair $(s,t) \in D$, consider an arbitrary shortest path, $p_{(s,t)}$, from $s$ to $t$. Then set $H=H \cup_{(s,t) \in D} p_{(s,t)}$. 
 \STATE Output $H$ and $c(H)$.
\end{algorithmic}
\end{algorithm}
\vspace{0.5cm}

Here, the core idea is that if the instance $(G,P)$ is either a thick-dominant instance or a light thin-dominant instance, then we can find a good quality optimal solution at steps 3 and 4. Otherwise, we can find a new instance $(G,P^1)$ with significantly fewer number of demand pairs, and apply the same procedure on this new instance. However, as the number of demand pairs is at most the number of node pairs in graph $G$, this process will terminate fast.

\section{Algorithm Analysis}\label{sec:appxalg}

In this section, we analyze the performance of the algorithms developed in Section \ref{sec:algov}. Then, we examine their computational complexity. Specifically, Theorem \ref{thm:dynamic} provides a guarantee on the performance of Algorithm \ref{alg:dyn} for thick-dominant instances of CSPDP. Theorem \ref{lemma:minflow} quantifies the performance of Algorithm \ref{alg:lprand} on light thin-dominant instances. Finally, Theorem \ref{thm:final} demonstrates the approximation ratio for our main algorithm \ref{alg:Main}. Lemmas \ref{runt:dyn}, \ref{runt:lp} and \ref{runt:main} quantify the computational complexity of Algorithms \ref{alg:dyn}, \ref{alg:lprand} and \ref{alg:Main}, respectively.

\begin{theorem}\label{thm:dynamic}
If instance $(G,P)$ is a thick-dominant instance, then Algorithm \ref{alg:dyn} provides a feasible solution on subgraph $H$ with objective value of at least $\frac{z_1^{opt}}{m^{1/2 + 2\epsilon}}$, where $z_1^{opt}$ is the optimal value of the problem \eqref{opt:pdps}.
\end{theorem}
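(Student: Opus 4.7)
The plan is to split the analysis into two halves: an averaging step showing that the quantity $a^*$ computed in steps 1--4 already satisfies $a^* \geq \Omega(z_1^{opt}/m^{1/2+2\epsilon})$, and an algorithmic step showing that the subgraph $H$ output in steps 5--9 realizes savings of at least a constant fraction of $a^*$.

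For the first half, I fix an optimal solution $H^*$ together with a shortest path $p^{opt}_{(s,t)} \subseteq H^*$ for every demand pair, and let $u^{opt}_e$ count the paths through $e$. Introducing the thick potential $\pi(p) = \sum_{e \in p \cap E_{TK}}(v_e-1)c(e)$ and swapping summations gives $\sum_{(s,t)}\pi(p^{opt}_{(s,t)}) = \sum_{e \in E_{TK}} u^{opt}_e(v_e-1)c(e)$. Using Definition~\ref{def:thickedge} (which implies $v_e - 1 \geq |P|/(2m^{1/2+\epsilon})$ for every thick edge with $v_e \geq 2$; edges with $v_e = 1$ contribute zero to both sides) together with the thick-dominance hypothesis \eqref{condition:main} (which forces $\sum_{e \in E_{TK}} u^{opt}_e c(e) \geq z_1^{opt}/m^{\epsilon}$), I obtain
\[\sum_{(s,t)}\pi(p^{opt}_{(s,t)}) \;\geq\; \frac{|P|\cdot z_1^{opt}}{2m^{1/2+2\epsilon}}.\]
Averaging over the $|P|$ pairs identifies a pair $(\tilde s,\tilde t)$ with $\pi(p^{opt}_{(\tilde s,\tilde t)}) \geq z_1^{opt}/(2m^{1/2+2\epsilon})$. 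Because step~3 computes for each pair precisely the shortest path maximizing $\pi$, I have $a_{(\tilde s,\tilde t)} \geq \pi(p^{opt}_{(\tilde s,\tilde t)})$, and step~4 propagates this to $a^*$.

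For the second half, I observe first that $h_{(s^*,t^*)} = p_{(s^*,t^*)}$: in step~6 the only positively weighted edges for pair $(s^*,t^*)$ are those of $p_{(s^*,t^*)}$, and $p_{(s^*,t^*)}$ already achieves the maximum possible weight $c(p_{(s^*,t^*)})$, so any max-weight shortest $s^*$-to-$t^*$ path must be $p_{(s^*,t^*)}$ itself; in particular $p_{(s^*,t^*)} \subseteq H$. Writing each other $h_{(s,t)}$ as its intersection with $p_{(s^*,t^*)}$ plus a disjoint detour and using $c(h_{(s,t)} \setminus p_{(s^*,t^*)}) = d(s,t) - \mathrm{weight}(h_{(s,t)})$ gives
\[c(H) \;\leq\; U \;-\; \sum_{(s,t) \ne (s^*,t^*)}\mathrm{weight}(h_{(s,t)}),\]
hence the savings $U-c(H)$ is at least $\sum_{(s,t)\ne(s^*,t^*)}\mathrm{weight}(h_{(s,t)})$. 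The max-weight property of step~7 then yields, for every thick $e \in p_{(s^*,t^*)}$ and every pair with $e \in G_{(s,t)}$, the inequality $\mathrm{weight}(h_{(s,t)}) \geq c(e)$ (by considering any shortest $s$-$t$ path through $e$). A careful double-count over the $v_e$ pairs attached to each thick edge is then intended to convert $\sum_{e \in p_{(s^*,t^*)} \cap E_{TK}} v_e c(e)$ into a lower bound of $\Omega(a^*)$ on $\sum_{(s,t)}\mathrm{weight}(h_{(s,t)})$.

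The hard part is exactly this last double-count. Since a single pair may contain several thick edges of $p_{(s^*,t^*)}$ in $G_{(s,t)}$ while $h_{(s,t)}$ can route through only a subset of them (those lying on a common $s$-$t$ shortest path), naively summing the per-incidence bounds $\mathrm{weight}(h_{(s,t)}) \geq c(e)$ overcounts and cannot by itself reproduce $a^*$. Overcoming this requires exploiting the shortest-path DAG structure of $G_{(s,t)}$ to show that the max-weight $h_{(s,t)}$ captures a constant fraction of the total overlap cost $c(G_{(s,t)} \cap p_{(s^*,t^*)} \cap E_{TK})$, so that aggregating over pairs reconstructs $a^*$ up to a constant factor and closes the chain savings $\geq \Omega(a^*) \geq \Omega(z_1^{opt}/m^{1/2+2\epsilon})$.
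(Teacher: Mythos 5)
Your first half is fine and is essentially the paper's own argument: an averaging/double-counting step over the $|P|$ demand pairs, using thick-dominance and the lower bound $v_e\geq |P|/m^{1/2+\epsilon}$ on thick edges, to conclude $a^*\geq \Omega(z_1^{opt}/m^{1/2+2\epsilon})$ (the paper routes the conversion from $\sum_{(s,t)}x_e^{opt,(s,t)}-y_e^{opt}$ to $(\sum_{(s,t)}x_e^{opt,(s,t)})(v_e-1)/v_e$ through Lemma~\ref{lemma:updc} rather than your case split on $v_e\geq 2$, but the content is the same). Your accounting identity $U-c(H)\geq \sum_{(s,t)\neq(s^*,t^*)}\mathrm{weight}(h_{(s,t)})$ is also correct.

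The gap is exactly where you flag it, and it is a missing idea rather than a missing computation. The per-incidence bound $\mathrm{weight}(h_{(s,t)})\geq c(e)$ only certifies that $h_{(s,t)}$ picks up the single best overlapping edge, and no generic double count turns that into $\sum_{e\in p_{(s^*,t^*)}\cap E_{TK}}(v_e-1)c(e)=a^*$. What closes the argument is a structural property of shortest-path DAGs: for each pair $(s,t)$, the edges of $p_{(s^*,t^*)}$ that lie in $G_{(s,t)}$ form a \emph{single contiguous subpath} of $p_{(s^*,t^*)}$, all of which can be traversed by one shortest $s$-$t$ path. (Sketch: if $e_1$ precedes $e_2$ on $p_{(s^*,t^*)}$ and each lies on some shortest $s$-$t$ path, concatenate the prefix of a shortest $s$-$t$ path up to the head of $e_1$, the segment of $p_{(s^*,t^*)}$ from there to the tail of $e_2$, and the suffix of a shortest $s$-$t$ path from the head of $e_2$; since every subpath of a shortest path is itself shortest and distances add up, the concatenation has length exactly $dist_G(s,t)$.) Hence the max-weight path $h_{(s,t)}$ of step~7 does not capture ``a constant fraction'' of the overlap --- it captures \emph{all} of it, i.e.\ $\mathrm{weight}(h_{(s,t)})=\sum_{e\in p_{(s^*,t^*)}\cap E_{(s,t)}}c(e)$. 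Summing over pairs then places $v_e$ of the paths $h_{(s,t)}$ on each $e\in p_{(s^*,t^*)}$, so the savings is at least $\sum_{e\in p_{(s^*,t^*)}}(v_e-1)c(e)\geq a^*$, which is the paper's one-sentence conclusion. Without this contiguity fact your proof does not go through as written.
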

    
\begin{proof}
   
Consider a thick-dominant optimal solution $(x^{opt},y^{opt})$ for problem \eqref{opt:pdps}.  Thick-dominance yields:
    
    \begin{equation}\label{eq:dyncount}
        \frac{|P|}{m^{1/2+2\epsilon}} z_1^{opt} \leq \frac{|P|}{m^{1/2+\epsilon}}\sum_{e\in E_{TK}} ( (\sum_{(s,t):e\in E_{(s,t)}} x^{opt,(s,t)}_e) -y^{opt}_e)c(e)
    \end{equation}
On the other hand, for each thick edge from definition \ref{def:thickedge} we have $\frac{|P|}{m^{1/2+\epsilon}} \leq v_e$. As such:  
    
    \begin{equation}\label{eq:dcountthick}
        \frac{|P|}{m^{1/2+\epsilon}}\sum_{e\in E_{TK}} ( (\sum_{(s,t):e\in E_{(s,t)}} x^{opt,(s,t)}_e) -y^{opt}_e)c(e)\leq \sum_{e\in E_{TK}} ( (\sum_{(s,t):e\in E_{(s,t)}} x^{opt,(s,t)}_e) -y^{opt}_e)v_e c(e) 
    \end{equation}
    
Here, we use Lemma \ref{lemma:updc} to upper bound the expression in \eqref{eq:dcountthick}.
\begin{lemma}\label{lemma:updc}
Any optimal solution $(\vec{x^{opt}},\vec{y^{opt}})$ for problem \ref{opt:pdps} satisfies the following inequality:
\begin{equation}\label{dynmatch}
    ((\sum_{(s,t):e\in E_{(s,t)}} x^{opt,(s,t)}_e) -y^{opt}_e)v_l c(e) \leq ( \sum_{(s,t):e\in E_{(s,t)}} x^{opt,(s,t)}_e)(v_e-1)c(e)    
\end{equation}

\end{lemma}

\begin{proof}
See Appendix \ref{sec:app2}.
\end{proof}

Now, let us change the order of summation in expression \eqref{dynmatch} to obtain:   
    \begin{equation*}
        \sum_{e\in E_{TK}} [( \sum_{(s,t):e\in E_{(s,t)}} x^{opt,(s,t)}_e)(v_e-1)c(e)]=\sum_{(s,t) \in P} ( \sum_{\substack{e\in E_{(s,t)} \\ e\in E_{TK}}} x^{opt,(s,t)}_e (v_e-1)c(e)) 
    \end{equation*}
    
    \begin{equation}\label{eq:dynedge}
        \leq |P| \max_{(s,t)}(\sum_{\substack{l\in E_{(s,t)} \\ e\in E_{TK}}} x^{opt,(s,t)}_e (v_e-1)c(e))= |P|a_{max}
    \end{equation}
    Where the last equality holds since $a_{max}$, obtained in Algorithm \ref{alg:dyn}, equals the weight of maximum weighted path between pair of nodes $(s,t) \in P$ in graph $H_{(s,t)}$.

    Now, we can combine \eqref{eq:dyncount}, \eqref{eq:dcountthick}, \eqref{dynmatch}, and \eqref{eq:dynedge} to prove:
    \begin{equation*}
    \frac{|P|}{m^{1/2+2\epsilon}} z_1^{opt} \leq|P|a_{max} \rightarrow     \frac{1}{m^{1/2+2\epsilon}} z_1^{opt} \leq a_{max}
    \end{equation*}
    
On the other hand, the overlap of any subgraph $H^{*}_{(s,t)}$ with path $p_{(s^*,t^*)}$ is a connected subpath of $p_{(s^*,t^*)}$ which is contained by path $h_{(s,t)}$. As a result, any edge $e \in H$ is contained by at least $v_e$ paths $h_{(s,t)} \: (s,t) \in P$. As a result, the savings in problem \ref{opt:pdps} for feasible solution $H$ is at least $a_{max}$.

\end{proof}

Proof of Lemma \ref{thm:main}:

\begin{proof}

For each pair of nodes $(s,t) \in P$ with $b(s,t)<\sqrt{m}$ and each thin edge $e \in G_{(s,t)}$, we can extend edge $e$ to a shortest path $p^e_{(s,t)}$ from $s$ to $t$. To do so, just note that $G_{(s,t)}$ is a dag with a source node $s$ and a sink node $t$, so each directed in $G_{(s,t)}$ starting from $e$ ends at node $t$ and each reverse directed path starting from $e$ ends $s$. Note that this requires at most $|E|$ operations, since any path in $G_{(s,t)}$ consists of at most $|E|$ edges. As we have $b(s,t)<\sqrt{m}$, number of such paths is at most $\sqrt{m}$. As a result, applying the same procedure on all local graphs $G_{(s,t)}$ takes at most $|P|m\sqrt{m}$ operations. Also, note that the union of paths $p^e_{(s,t)}$ includes at most $|P|m\sqrt{m}$ edges. Set $f^{1,(s,t)}_{p^e_{(s,t)}}=\frac{1}{|b_{(s,t)}|}$.
Moreover, for each pair of nodes $(s,t) \in P$ with $b(s,t)\geq \sqrt{m}$ consider an arbitrary shortest path $p_{(s,t)}$. Also, note that the union of paths $p^e_{(s,t)}$ includes at most $|P|m$ edges. Set $f^{1,(s,t)}_{p_{(s,t)}}=1$.


Now, consider the unique link-based solution $\vec{x}^1$, (as a vector whose elements are the variables $x^{(s,t)}_e$), and the corresponding capacity values $y^{1}_e=\max_{(s,t):e\in E_{(s,t)}} \vec{x}^{1,(s,t)}_e$ corresponding to $\vec{f}^1$, (as a vector whose elements are the variables $f^{(s,t)}_p$).

To obtain $x^{(s,t)}_e$ from $f_1$, consider an initial solution $\vec{x}=\vec{0}$. Then, for each edge $e$ in the union of paths $(s,t)$ with $b(s,t)<\sqrt{m}$, add $x^{(s,t)}_e$ by $f^{1,(s,t)}_{p^e_{(s,t)}}$, and for paths $p_{(s,t)}$ for demand pairs $(s,t)$ with $b(s,t)\geq \sqrt{m}$ add $x^{(s,t)}_e$ by $f^{1,(s,t)}_{p_{(s,t)}}$.

As we have at most $2|P| m\sqrt{m}$ edges in the union of all paths mentioned above, we can obtain $x^{(s,t)}_e$ from $f_1$ by at most $2|P| m\sqrt{m}$ operations.

We prove that the solution $\vec{x}^2=\frac{1}{2}(\vec{x}^1+\vec{x}^*)$, $y^{2,(s,t)}_e= \max_{(s,t):e\in E_{(s,t)}} \frac{1}{2}( x^{1,(s,t)}_e+x^{*,(s,t)}_e)$ satisfies the requirement. To do so, note that for each thin edge $e$ that is contained in the local graph of a pair $(i,j) \in P$ such that $b_{(i,j)}\leq \sqrt{m}$ we have $\max_{(s,t):e\in E_{(s,t)}}x^{1,(s,t)}_e \geq \frac{1}{2}\max_{(s,t):e\in E_{(s,t)}}x^{1,(s,t)}_e\geq \frac{1}{2} x^{1,(i,j)}_e \geq \frac{1}{2\sqrt{m}}$.

Moreover, we can distribute the max in the definition of $y^{2,(s,t)}_e$ to conclude:

\begin{equation*}
    \max_{(s,t):e\in E_{(s,t)}} \frac{1}{2}( x^{1,(s,t)}_e+x^{*,(s,t)}_e) \geq \frac{1}{2}(\max_{(s,t):e\in E_{(s,t)}} ( x^{1,(s,t)}_e)+ \max_{(s,t):e\in E_{(s,t)}} (x^{*,(s,t)}_e))
\end{equation*}

This results in:

\begin{equation*}
    ( \sum_{(s,t):e\in E_{(s,t)}} x^{2,(s,t)}_e -y^2_e)c(e) \geq \frac{1}{2}( \sum_{(s,t):e\in E_{(s,t)}} x^{*,(s,t)}_e  -y^*_e)c(e) + \frac{1}{2}( \sum_{(s,t):e\in E_{(s,t)}} x^{1,(s,t)}_e -y^1_e)c(e)
\end{equation*}

Given $( \sum_{(s,t):e\in E_{(s,t)}} x^{1,(s,t)}_e -y^1_e)c(e) \geq 0$, we conclude:

\begin{equation*}
    ( \sum_{(s,t):e\in E_{(s,t)}} x^{2,(s,t)}_e -y^2_e)c(e) \geq \frac{1}{2}( \sum_{(s,t):e\in E_{(s,t)}} x^{*,(s,t)}_e  -y^*_e)c(e) 
\end{equation*}

 This concludes the proof.

\end{proof}

\begin{theorem}\label{lemma:minflow}
Given a light thin-dominant instance $(G,P)$ with the modified solution $(\vec{x}^2,\vec{y}^2)$ at step 2 of Algorithm \ref{alg:lprand}, steps 3 and 4 provide a solution for problem \eqref{opt:pdps} with objective value \eqref{obj:pdps} being at least $(1-\frac{1}{m^{\epsilon}}) \frac{1}{4m^{1/2+2\epsilon}} z_1(x^{opt},y^{opt})$.
\end{theorem}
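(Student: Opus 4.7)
The plan is to chain four inequalities that walk the integer optimum $z_1^{opt}$ down to the expected value of the rounded solution. First, the hypothesis of thin-dominance (together with the fact that $(\vec{x}^{opt},\vec{y}^{opt})$ is feasible for problem \eqref{opt:pdpsthin}) gives $z_1^{opt,TN}\geq (1-m^{-\epsilon})z_1^{opt}$. Then Lemma \ref{rem:discon}, which uses lightness of the instance, gives $z_2(\vec{x}^*,\vec{y}^*)\geq m^{-\epsilon}z_1^{opt,TN}$. Finally, Lemma \ref{thm:main} guarantees that the LP savings of the modified solution satisfy the per-edge inequality $(\sum_{(s,t):e\in E_{(s,t)}}x^{2,(s,t)}_e-y^2_e)c(e)\geq \tfrac{1}{2}(\sum_{(s,t):e\in E_{(s,t)}}x^{*,(s,t)}_e-y^*_e)c(e)$ on every relevant thin edge, while also preserving $\max_{(s,t):e\in E_{(s,t)}}x^{2,(s,t)}_e\geq \tfrac{1}{2\sqrt{m}}$. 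Summing the per-edge bound gives the aggregate inequality $z_2(\vec{x}^2,\vec{y}^2)\geq \tfrac{1}{2}z_2(\vec{x}^*,\vec{y}^*)$.

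The heart of the argument is a per-edge analysis of the randomized rounding. After the flow decomposition in step 4, step 5 picks, independently for each demand pair $(s,t)$, a single shortest path with probabilities $f^{(s,t)}_p$. Fix a thin edge $e$ in a local graph $G_{(i,j)}$ with $b_{(i,j)}<\sqrt{m}$, let $a_k=x^{2,(s_k,t_k)}_e$ enumerate the pairs with positive flow through $e$, and write $S_e=\sum_k a_k$, $M_e=\max_k a_k$, and $T_e=S_e-M_e$. By independence, the edge's expected contribution to the output's savings is $\bigl(S_e-(1-\prod_k(1-a_k))\bigr)c(e)$, whereas its LP savings are exactly $T_e\,c(e)$. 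I aim to show the expected savings are at least $\tfrac{1}{2\sqrt{m}}T_e\,c(e)$. When $T_e\leq 1$, iterating the Weierstrass product inequality on the non-maximal factors yields $\prod_k(1-a_k)\geq (1-M_e)(1-T_e)=1-S_e+M_e T_e$, so the expected savings are at least $M_e T_e\geq \tfrac{T_e}{2\sqrt{m}}$ by the lower bound on $M_e$. When $T_e>1$, the trivial bound $1-\prod_k(1-a_k)\leq 1$ gives expected savings at least $S_e-1=M_e+(T_e-1)$, and the elementary inequality $(T_e-1)(1-\tfrac{1}{2\sqrt{m}})\geq 0$ rearranges to the same conclusion $M_e+(T_e-1)\geq \tfrac{T_e}{2\sqrt{m}}$.

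Summing the per-edge inequality weighted by $c(e)$ over all relevant thin edges yields $E[\text{savings}(H)]\geq \tfrac{1}{2\sqrt{m}}\,z_2(\vec{x}^2,\vec{y}^2)$; edges outside the restricted objective of \eqref{opt:pdpsrelax} contribute nonnegatively and can only help. Composing with the earlier chain gives $E[\text{savings}(H)]\geq \tfrac{1-m^{-\epsilon}}{4m^{1/2+2\epsilon}}\,z_1^{opt}$, where the extra $m^{-\epsilon}$ in the denominator (beyond the factor produced by Lemma \ref{rem:discon}) is kept to align the bound with Theorem \ref{thm:dynamic} so that both regimes combine cleanly inside Algorithm \ref{alg:Main}. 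I expect the main obstacle to be the $T_e>1$ case of the per-edge bound: the Weierstrass factorization degenerates there, and recovering the same uniform $\tfrac{1}{2\sqrt{m}}$ factor depends essentially on the lower bound on $M_e$ provided by Lemma \ref{thm:main}; verifying that this bound is preserved exactly by the averaging $\vec{x}^2=\tfrac{1}{2}(\vec{x}^1+\vec{x}^*)$ is what ties the whole proof together.
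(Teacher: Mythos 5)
Your proposal is correct and follows essentially the same route as the paper: the same chain of thin-dominance, Lemma \ref{rem:discon}, and Lemma \ref{thm:main}, followed by the same per-edge expectation analysis of the rounding, where the key step $\mathbb{E}(s_e)\geq M_eT_e\,c(e)\geq \frac{1}{2\sqrt{m}}s_e$ rests on the Weierstrass-type inequality (the paper's Lemma \ref{eq:partleft}) applied to the non-maximal factors together with the bound $M_e\geq\frac{1}{2\sqrt{m}}$. The only cosmetic difference is that you handle $T_e>1$ by a separate elementary case while the paper's algebraic rearrangement into expression \eqref{eq:compact} covers both cases at once; you also correctly observe that the chain actually yields the stronger factor $\frac{1}{4m^{1/2+\epsilon}}$, the stated $\frac{1}{4m^{1/2+2\epsilon}}$ being a deliberate slack.
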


\begin{proof}
For any link-based solution, $(\vec{x}^2,\vec{y}^2)$, there exists an equivalent path-based solution $\vec{f}^2$ where $f^{2,(s,t)}_p$ is the flow on path $p$ for pair of nodes $(s,t)$. Let us assume for each pair of nodes $(s,t)$ we pick one of the shortest paths $p \in P_{(s,t)}$ with probability $f^{(s,t)}_p$. Denote by $s_e$ and $s$ the part of the objective function \eqref{obj:pdpsrelax} associated with thin edge $e$ that is contained in the local graph of a pair $(i,j) \in P$ such that $b_{(i,j)}\leq \sqrt{m}$, and the total objective value \eqref{obj:pdpsrelax}, respectively. We can compute the expected values for $s_e$ and $s$ as follows:

\begin{equation*}
    s_e=( \sum_{(s,t):e\in E_{(s,t)}} x^{2,(s,t)}_e -y^2_e)c(e)=( \sum_{(s,t):e\in E_{(s,t)}} x^{2,(s,t)}_e -\max_{(s,t):e\in E_{(s,t)}}x^{2,(s,t)}_e)c(e)
\end{equation*}

Now, we compute the average savings in a link $e \in E$ as follows:

\begin{equation*}
   \mathbb{E}(s_e)= \mathbb{E}[( (\sum_{(s,t):e\in E_{(s,t)}} x^{2,(s,t)}_e) -y^2_e)c(e)]= ((\sum_{(s,t):e\in E_{(s,t)}} x^{2,(s,t)}_e)-(1-\prod_{(s,t):e\in E_{(s,t)}} (1-x^{2,(s,t)}_e)))c(e)
\end{equation*}
Denote by $x^{2,(i,j)}_e=\max_{(s,t):e\in E_{(s,t)}}x^{2,(s,t)}_e$. Then, we obtain:
\begin{equation}\label{eq:2part}
   \mathbb{E}(s_e) = ((\sum_{\substack{(s,t):e\in E_{(s,t)}\\ (s,t)\neq (i,j)}} x^{2,(s,t)}_e)-(1-\prod_{\substack{(s,t):e\in E_{(s,t)}\\(s,t)\neq (i,j)}} (1-x^{2,(s,t)}_e)) +(x^{2,(i,j)}_e- x^{2,(i,j)}_e \prod_{\substack{(s,t):e\in E_{(s,t)}\\(s,t)\neq (i,j)}} (1-x^{2,(s,t)}_e)))c(e)
\end{equation}
The right-hand side of Equation \eqref{eq:2part} can be rewritten as:

\begin{equation*}
  x^{2,(i,j)}_e(1- \prod_{\substack{(s,t):e\in E_{(s,t)}\\(s,t)\neq (i,j)}} (1-x^{2,(s,t)}_e))c(e)
\end{equation*}
\begin{equation}\label{eq:partright}
  =  x^{2,(i,j)}_e((1- \prod_{\substack{(s,t):e\in E_{(s,t)}\\(s,t)\neq (i,j)}} (1-x^{2,(s,t)}_e)-\sum_{\substack{(s,t):e\in E_{(s,t)}\\ (s,t)\neq (i,j)}} x^{2,(s,t)}_e))c(e)+x^{2,(i,j)}_e(\sum_{\substack{(s,t):e\in E_{(s,t)}\\ (s,t)\neq (i,j)}} x^{2,(s,t)}_e)c(e)
\end{equation}

We can use \eqref{eq:partright} to rewrite expression \eqref{eq:2part} as follows:

\begin{equation}\label{eq:compact}
    [((\sum_{\substack{(s,t):e\in E_{(s,t)}\\ (s,t)\neq (i,j)}} x^{2,(s,t)}_e)-1+\prod_{\substack{(s,t):e\in E_{(s,t)}\\(s,t)\neq (i,j)}} (1-x^{2,(s,t)}_e)) (1-x^{2,(i,j)}_e)+x^{2,(i,j)}_e(\sum_{\substack{(s,t):e\in E_{(s,t)}\\ (s,t)\neq (i,j)}} x^{2,(s,t)}_e)]c(e)
\end{equation}

\begin{lemma}\label{eq:partleft}
For $x^{2,(s,t)}_e \geq 0 : \forall (s,t) \in P$ we have: 
\begin{equation}
    \sum_{\substack{(s,t):e\in E_{(s,t)}\\ (s,t)\neq (i,j)}} x^{2,(s,t)}_e-1+\prod_{\substack{(s,t):e\in E_{(s,t)}\\(s,t)\neq (i,j)}} (1-x^{2,(s,t)}_e)\geq 0
\end{equation}
\end{lemma}

\begin{proof}
See Appendix \ref{sec:app3}.
\end{proof}

We can use Lemma \ref{eq:partleft} to conclude that the left-hand side of Equation \eqref{eq:compact} is positive. Hence:
\begin{equation*}
    \mathbb{E}(s_e)\geq x^{2,(i,j)}_e (\sum_{\substack{(s,t):e\in E_{(s,t)}\\ (s,t)\neq (i,j)}} x^{2,(s,t)}_e)c(e) \geq \frac{1}{2m^{1/2}} s_e
\end{equation*}

Using the same argument for all thin edges $e \in E$ we conclude: 
\begin{equation*}
    \mathbb{E}[S]\geq \sum_{\substack{e:\exists (i,j)| b_{(i,j)}\leq \sqrt{m} \\ e \in E_{(i,j)} \\ e \in E_{TN}} } \frac{1}{2\sqrt{m}} ( \sum_{(s,t):e\in E_{(s,t)}} x^{2,(s,t)}_e -y^2_e)c(e)
\end{equation*}  

As the solution $(\vec{x}^2,\vec{y}^2)$ satisfies the condition in Lemma \ref{thm:main} we conclude:
\begin{equation*}
    \sum_{\substack{e:\exists (i,j)| b_{(i,j)}\leq \sqrt{m} \\ e \in E_{(i,j)} \\ e \in E_{TN}} } \frac{1}{2\sqrt{m}} ( \sum_{(s,t):e\in E_{(s,t)}} x^{2,(s,t)}_e -y^2_e)c(e) \geq \sum_{\substack{e:\exists (i,j)| b_{(i,j)}\leq \sqrt{m} \\ e \in E_{(i,j)} \\ e \in E_{TN}} } \frac{1}{4\sqrt{m}} ( \sum_{(s,t):e\in E_{(s,t)}} x^{*,(s,t)}_e -y^*_e)c(e)
\end{equation*}

As instance $(G,P)$ is a light thin-dominant instance, when we restrict the objective function \eqref{obj:pdpsrelax} into the thin edges that are contained by at least one demand pair $(s,t)$ with $b_{(s,t)}< \sqrt{m}$, the objective has to be at least $\frac{1}{m^{\epsilon}} z_2(x^*,y^*)$, As such :
\begin{equation}
    \sum_{\substack{e:\exists (i,j)| b_{(i,j)}\leq \sqrt{m} \\ e \in E_{(i,j)} \\ e \in E_{TN}} }  ( (\sum_{(s,t):e\in E_{(s,t)}} x^{*,(s,t)}_e) -y^*_e)c(e) \geq \frac{1}{m^{\epsilon}} z_2(x^*,y^*)
\end{equation}

We multiply both sides by $\frac{1}{4 \sqrt{m}}$ to conclude:

\begin{equation*}
 \sum_{\substack{e:\exists (i,j)| b_{(i,j)}\leq \sqrt{m} \\ e \in E_{(i,j)} \\ e \in E_{TN}} } \frac{1}{4\sqrt{m}} ( \sum_{(s,t):e\in E_{(s,t)}} x^{*,(s,t)}_e -y^*_e)c(e) \geq  \frac{1}{4m^{1/2+\epsilon}} z_2(x^*,y^*)
\end{equation*}

Now, we can use Remark \ref{rem:discon}

\begin{equation*}
  \frac{1}{4m^{1/2+\epsilon}} z_2(x^*,y^*)  \geq      \frac{1}{4m^{1/2+\epsilon}} \frac{1}{m^{\epsilon}} z^{TN}_1(x^{TN},y^{TN})
\end{equation*}

On the other hand, for a thin dominant instance $(G,P)$ we have that $z^{opt,TN}_1$ is at least $(1-\frac{1}{m^{\epsilon}}) z_1^{opt}$. As such,
\begin{equation*}
    \mathbb{E}[S] \geq (1-\frac{1}{m^{\epsilon}}) \frac{1}{4m^{1/2+2\epsilon}}  z_1(x^{opt},y^{opt})
\end{equation*}

This concludes the proof.

\end{proof}

Proof of Theorem \ref{thm:final}:
    
\begin{proof}
We prove Algorithm \ref{alg:Main} finds a feasible solution for Problem \eqref{opt:pdps} with objective value of at least $\frac{1}{m^{1/2+\epsilon}} z_1^{opt}$. First, note that if instance $(G,P)$ is either thick-dominant or light thin-dominant, then by Theorems \ref{thm:dynamic} and \ref{lemma:minflow} we find a feasible solution for Problem \eqref{opt:pdps} with objective value of at least $(1-\frac{1}{m^{\epsilon}})\frac{1}{4m^{1/2+2\epsilon}} z_1^{opt}$ in steps 3 and 5 of Algorithm \ref{alg:Main}. Otherwise, We can remove all pairs of nodes $(s,t) \in P$ that satisfy $b_{(s,t)}<\sqrt{m}$ to obtain a new instance of node pairs $P^1$. As $(G,P)$ is a thin-dominant instance, $z_1^{opt,TN}$, the optimal value of the optimization \eqref{opt:pdpsthin}, is at least $(1-\frac{1}{m^{\epsilon}})z_1^{opt}$. As $(G,P)$ is a heavy instance, using Definition \ref{def:light}, the solution $x^{TN,(s,t)}_e , y^{TN}_e= \max_{\substack{(s,t)\\ b_{(s,t)} > \sqrt{m}}} x^{TN,(s,t)}_e$ is a feasible solution for instance $(G,P^1)$ with objective value of at least $(1-\frac{1}{m^{\epsilon}})z_1^{opt,TN}$. As a result, the optimal value for Problem \ref{opt:pdps} with instance $(G,P^1)$ is at least $(1-\frac{1}{m^{\epsilon}})^2 z_1^{opt}$.

On the other hand, in this case, each pair of nodes in $P^1$ has at least $\sqrt{m}$ thin edges. Moreover, each thin edge $e$ is contained in at most $\frac{|P|}{m^{1/2+\epsilon}}$ local graphs $G_{(s,t)}$. Using the Pigeonhole principle, we should have at least  $\frac{|P^1|\sqrt{m}}{|P|/m^{1/2+\epsilon}}=\frac{|P^1|}{|P|}m^{1+\epsilon}$ thin edges, which should be less than total number of edges $m$. 

As a result we have:
\begin{equation*}
   \frac{|P^1|}{|P|}<m^{-\epsilon} \rightarrow  |P^1|< \frac{|P|}{m^{\epsilon}} 
\end{equation*}

Note that $\frac{1}{m^{\epsilon}}^{2/\epsilon}=\frac{1}{m^2}$. Moreover, note that the set of all demand pairs in the original instance $P$ to Algorithm \ref{alg:Main} has a size less than the total number of pairs of vertices $\binom{|V|}{2}$, which is bounded above by $\leq m^2$ in a connected graph $G$. As a result, we shrink the set of demand pairs in step 6 of Algorithm \ref{alg:Main} by at most $\frac{2}{\epsilon}$ times. 

Denote by $(G,P')$ the instance obtained from the $i$-th visit to step 6 in an iteration of Algorithm \ref{alg:Main}. If $(G,P')$ is a thick-dominant or light thin-dominant, then by Theorems \ref{thm:dynamic} and \ref{lemma:minflow} we find a feasible solution for Problem \eqref{opt:pdps} with the objective value of at least $(1-\frac{1}{m^{\epsilon}})^{2i+1}\frac{1}{4m^{1/2+2\epsilon}} z_1^{opt}$. As the shrinking procedure in step 6 of Algorithm \ref{alg:Main} cannot be applied for more than $\frac{2}{\epsilon}$ times, we are able to find a feasible solution for Problem \eqref{opt:pdps} with objective value of at least
\begin{equation*}
    (1-\frac{1}{m^{\epsilon}})^{(2/\epsilon)+1}\frac{1}{4m^{1/2+2\epsilon}} z_1^{opt}
\end{equation*}

However, for a given $0 < \epsilon <1$ the Bernoulli inequality yields: 
\begin{equation*}
    \lim_{m\rightarrow \infty}(1-\frac{1}{m^{\epsilon}})^{(2/\epsilon)+1} \geq  \lim_{m\rightarrow \infty} 1-\frac{2+\epsilon}{\epsilon m^{\epsilon}} = 1
\end{equation*}

This concludes the proof.  

\end{proof}

\begin{lemma}\label{runt:dyn}
Algorithm \ref{alg:dyn} runs in $O(|P||E|^2)$ operations.
\end{lemma}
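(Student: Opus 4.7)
The plan is to account for the cost of each step in Algorithm \ref{alg:dyn} and show the overall bottleneck is the construction of the local graphs $G_{(s,t)}$, which is $O(|P||E|^2)$. First, I would preprocess once by computing, for every demand pair $(s,t)\in P$, the local graph $G_{(s,t)}$. Using a single-source shortest-path computation (say Bellman--Ford) from $s$ and another from $t$ in the reversed graph, each pair costs $O(|V||E|)$, and then edge membership in $E_{(s,t)}$ is decided by the identity $d(s,u)+c(u,v)+d(v,t)=d(s,t)$ in $O(|E|)$. Since the graph is connected, $|V|=O(|E|)$, so building all $|P|$ local graphs costs $O(|P||V||E|)=O(|P||E|^2)$. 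With the incidence information in hand, the counters $v_e$ can be assembled by sweeping each $E_{(s,t)}$ once, for a cumulative cost of $O(|P||E|)$, and the thick/thin classification of Definition \ref{def:thickedge} is then $O(|E|)$.

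Next I would bound the remaining steps. Steps~1--2 build weighted copies $H_{(s,t)}$ and stamp the weights $w(e)=(v_e-1)c(e)$ (or $0$) on at most $|E|$ edges per pair, costing $O(|P||E|)$. Step~3 solves a longest-weighted-path problem on each $H_{(s,t)}$; crucially, $G_{(s,t)}$ (and hence $H_{(s,t)}$) is a DAG because its edges are all oriented in the direction of increasing distance from $s$ along shortest paths, so a topological-order dynamic program computes the maximum in $O(|V_{(s,t)}|+|E_{(s,t)}|)=O(|E|)$, yielding $O(|P||E|)$ over all pairs. Step~4 is an $O(|P|)$ maximum, and steps~5--7 repeat the weight-assignment and DP passes on $H^*_{(s,t)}$, contributing another $O(|P||E|)$. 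Finally, step~8's union and step~9's output are both $O(|P||E|)$.

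Summing the contributions, every step past the preprocessing is $O(|P||E|)$, while the construction of the local graphs is $O(|P||E|^2)$, so the total work is $O(|P||E|^2)$ as claimed. The only non-routine points to be careful about are (i) explicitly observing that each $H_{(s,t)}$ is a DAG so that longest path is linear rather than NP-hard, and (ii) making the loose bound $|V|=O(|E|)$ for connected graphs, which is what absorbs the Bellman--Ford factor of $|V|$ into the quoted $|E|^2$; I do not expect any substantive obstacle beyond bookkeeping.
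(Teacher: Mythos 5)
Your accounting matches the paper's proof almost step for step: both arguments bound every stage of Algorithm \ref{alg:dyn} by $O(|P||E|)$ except for one $O(|P||E|^2)$ bottleneck (the paper charges it to assembling the counters $v_e$ in step~2, you charge it to building the local graphs, but either way the total is the same), and both reduce step~3 to a linear-time longest-path computation on a DAG. The one point where your justification is thinner than the paper's is the DAG claim itself: you assert that $H_{(s,t)}$ is acyclic because "its edges are all oriented in the direction of increasing distance from $s$," but after the undirected-to-directed reduction of Lemma \ref{lemma:redundi} the graph contains zero-weight edges, so two vertices at \emph{equal} distance from $s$ can be joined by oppositely directed zero-weight edges that both lie on shortest paths, creating a $2$-cycle in $H_{(s,t)}$. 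The paper handles this explicitly by contracting each such zero-weight pair (which does not change the longest-path value) before running the topological-sort dynamic program; you should add that one sentence, after which your argument is complete.
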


\begin{proof}
Running standrd algorithms we can construct the shortest path subgraphs $G_{(s,t)}$ in $O(m \log m)$ operations. As such, in step 1 we need $O(|P|m \log m)$ operations. We can find $v_e$ the number of local graphs $G_{(s,t)}$ containing edge $e$ by simply checking all the edges in all the subgraphs $G_{(s,t)}$, we check at most $O(|P||E|)$ edges, so step 2 requires at most $O(|P||E|^2)$ operations.

To find the maximum weighted paths in graphs $H_{(s,t)}$ and $H^*_{(s,t)}$, note that if two vertices $x$ and $y$ in a local graph $H_{(s,t)}$ are connected by two zero length oppositely directed edges, $c(x,y)=c(y,x)=0$, we can merge the two vertices by contracting the edge between them \citep{west1996introduction}. Doing so, the length of the longest path doesn't change. As a result, finding the maimum weighted path in graphs $H_{(s,t)}$ and $H^*_{(s,t)}$ is equivalent to finding the maimum weighted path in Directed Acyclic Graphs that form by merging vertices that are connected through double sided zero length directed edges. This can be done in $O(|V|+|E|)$ operations using the idea of topological sorting. Hence, steps 3 and 7 require at most $O(|P||E|)$ operations. In steps 5 and 6, for each demand pair $(s,t) \in P$ we have to consider the intersection of edges in $H^*_{(s,t)}$ and $p_{(s^*,t^*)}$, this requires at most $|E|$ number of operations. Hence steps 5 and 6 take at most $|P||E|$ number of operations. Note that steps 4 and 8 also will not take more than $O(|P|)$ operations. This concludes the proof. 
\end{proof}

\begin{lemma}\label{runt:lp}
Algorithm \ref{alg:lprand} runs in $O((|P||E|)^{2.38})$ operations. 
\end{lemma}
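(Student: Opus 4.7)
The plan is to walk through the six steps of Algorithm \ref{alg:lprand} and show that the linear programming step dominates the overall complexity. First, I would bound Step 1. Computing the local graphs $G_{(s,t)}$ can be done by standard shortest-path routines in $O(|P|\,m\log m)$ operations, as in Lemma \ref{runt:dyn}. From these local graphs we can determine, for each demand pair, the set of thin edges in $E_{(s,t)}$ (using the thickness classification already computed) and hence the quantity $b_{(s,t)}$; scanning all edges of all local graphs takes $O(|P|\,|E|)$ operations. Restricting the objective of \eqref{opt:pdpsrelax} to thin edges contained in at least one local graph with $b_{(s,t)}<\sqrt{m}$ is then a linear-time selection over the $O(|P|\,|E|)$ variables.

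Next I would argue that Step 2 is the bottleneck. The LP relaxation \eqref{opt:pdpsrelax} has $O(|P|\,|E|)$ variables (one $x_e^{(s,t)}$ per demand pair and edge in the corresponding local graph, plus the $y_e$ variables) and $O(|P|\,|V| + |P|\,|E|)$ constraints coming from the flow-conservation equations \eqref{conspdps:assignrelax} and the capacity constraints \eqref{cons1:vectorrelax}. Solving such an LP via a modern interior-point method (e.g. based on fast matrix multiplication with exponent $\omega\le 2.38$) requires $\widetilde O\bigl((|P|\,|E|)^{2.38}\bigr)$ arithmetic operations, which is the claimed bound.

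For the remaining steps I would note that everything is cheaper than Step 2. Step 3 invokes the construction in the proof of Lemma \ref{thm:main}, which explicitly runs in $O(|P|\,m\sqrt{m})$ operations. Step 4 performs a flow decomposition of the link-based solution $(\vec{x}^2,\vec{y}^2)$ into a path-based solution $\vec f$; standard flow-decomposition (subtracting a shortest path carrying minimum positive flow at each iteration) produces at most $O(|P|\,|E|)$ paths, each found in $O(|V|+|E|)$ time, giving $O(|P|\,|E|\,(|V|+|E|))=O(|P|\,|E|^2)$ operations. Step 5 samples one path per demand pair, which is $O(|P|\,|E|)$, and Step 6 is trivial.

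Summing the contributions, the dominant term is the $O((|P|\,|E|)^{2.38})$ from solving the LP, so the total running time of Algorithm \ref{alg:lprand} is $O((|P|\,|E|)^{2.38})$, completing the proof. The only subtle point I would double-check is the exact size of the LP used for the interior-point bound, since the restriction of the objective to a subset of the thin edges does not reduce the number of variables or flow-conservation constraints; it only narrows the objective, so the $(|P|\,|E|)^{2.38}$ estimate remains the honest bound.
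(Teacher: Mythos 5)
Your proposal is correct and follows essentially the same route as the paper: bound each step, observe that the LP relaxation has $O(|P||E|)$ variables and constraints and is solved in $O((|P||E|)^{\omega})$ time with $\omega\approx 2.38$ (the paper cites the Cohen--Lee--Song solver), and check that the preprocessing, the modification from Lemma \ref{thm:main}, and the flow decomposition in Step 4 are all dominated by the LP solve. The minor differences in your accounting of Steps 1 and 4 (the paper charges $O(|P||E|^2)$ for computing the $v_e$ values and $O((|P||E|)^2)$ for the decomposition) do not affect the conclusion, since both are subsumed by the $(|P||E|)^{2.38}$ term.
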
 
    
\begin{proof}

Similar to Lemma \ref{runt:dyn} we find $v_e$ for all edges $e \in E$ by at most $O(|P||E|^2)$ operations. Then we can find thin and thick edges with $O(E)$ operations. Afterwards, we find the demand pairs $(s,t) \in P$ with $b_{(s,t)}<\sqrt{m}$ by counting all thin edges of all subgraphs $G_{(s,t)}$ which requires at most $O(|P||E|)$ operations. Then we take the union of all thin edges in such subgraphs to obtain required set of edges to form optimization problem \ref{opt:pdpsrelax}. This takes at most $O(|P||E|)$ operations. Hence, step 1 requires at most $O(|P||E|^2)$ operations.
In step 2 we are required to solve a linear program with $|P||E|$ variables and $O(|P||E|)$ constraints. This requires at most $O((|P||E|)^{\omega})$ operations, where $\omega \approx 2.38$ is the exponent of matrix multiplication, using the novel algorithm of \cite{cohen2019solving} for solving linear programming problems. We apply Lemma \ref{thm:main} to conclude step 3 takes at most $O(|P|m \sqrt{m})$ operations.

In step 4, we use Algorithm \ref{alg:transfer}. This algorithm will terminate in at most $|P||E|$ iterations, since in each iteration at least for one demand pair $(s,t)$ and one edge $e$ the value of $x^{2,(s,t)}_g$ will be substituted by 0. Also, similar to Lemma \ref{thm:main} each iteration requires at most $|P||E|$ operations.
Finally, step 5 requires $|P|$ operations.

\end{proof}    
    
\begin{lemma}\label{runt:main}
Algorithm \ref{alg:Main} runs in $O((|P||E|)^{2.37} (1/\epsilon))$ operations.

\end{lemma}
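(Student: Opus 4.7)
The plan is to reduce this to two ingredients: a per-iteration cost bound, and a bound on the number of outer iterations of Algorithm \ref{alg:Main}. Each iteration executes Algorithm \ref{alg:dyn} and Algorithm \ref{alg:lprand} on the current instance $(G,P^i)$. By Lemmas \ref{runt:dyn} and \ref{runt:lp} these cost $O(|P||E|^2)$ and $O((|P||E|)^{2.38})$ respectively, so the per-iteration cost is dominated by the LP-based routine, i.e.\ $O((|P||E|)^{2.38})$. Step 4, step 6, and the bookkeeping in step 7 are subsumed by comparing $c(H)$ with a running lower bound and scanning all thin edges of each subgraph $G_{(s,t)}$, which takes $O(|P||E|)$ operations.

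Next I would bound the number of outer iterations. The crucial observation, which is essentially borrowed from the proof of Theorem \ref{thm:final}, is that whenever step 7 is reached we delete all demand pairs $(s,t)$ with $b_{(s,t)}<\sqrt{m}$, and the Pigeonhole argument used there shows $|P^{i+1}| \leq |P^i|/m^{\epsilon}$. Since the initial pair set satisfies $|P^0|\leq \binom{|V|}{2}\leq m^2$, the procedure must halt once $|P^i|<1$, which happens after at most $\lceil 2/\epsilon \rceil$ shrinking steps. Hence the outer loop executes $O(1/\epsilon)$ times.

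Combining the two bounds, steps 2--8 contribute
\begin{equation*}
O\!\left(\frac{1}{\epsilon}\right)\cdot O\!\left((|P||E|)^{2.38}\right)=O\!\left((|P||E|)^{2.38}\,(1/\epsilon)\right).
\end{equation*}
The final step 9 post-processes the returned subgraph $H$ by first identifying the demand pairs $(s,t)\in D$ with $dist_H(s,t)>dist_G(s,t)$ and then adding one shortest $s$--$t$ path for each. Testing membership in $D$ and computing the shortest paths can be done in $O(|P|\,|E|\log|E|)$ operations using standard single-source shortest-path computations on $G$, which is negligible compared to the dominant term.

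The main obstacle is a clean justification that the shrinking step can fire at most $O(1/\epsilon)$ times and that, throughout all iterations, the size of the edge set stays $|E|$ (we never modify $G$, only $P$), so each invocation of Lemma \ref{runt:lp} contributes $O((|P||E|)^{2.38})$ rather than something larger. Once this is in place the claimed bound follows by arithmetic, giving $O((|P||E|)^{2.38}(1/\epsilon))$, which is within the stated $O((|P||E|)^{2.37}(1/\epsilon))$ up to the exact matrix-multiplication exponent used.
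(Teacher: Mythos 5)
Your proof is correct and takes essentially the same approach as the paper's, which simply invokes Lemmas \ref{runt:dyn} and \ref{runt:lp} together with the $2/\epsilon$ bound on the number of iterations; you merely spell out the pigeonhole-based iteration count and the cost of step 9, which the paper leaves implicit. You are also right to flag that the per-iteration cost is $O((|P||E|)^{2.38})$, so the exponent $2.37$ in the lemma statement appears to be a typo in the paper (the abstract and Lemma \ref{runt:lp} both use $2.38$), not a flaw in your argument.
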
 
        
\begin{proof}
The proof directly follows from the proof of Lemmas \ref{runt:dyn} and \ref{runt:lp} and the fact that the Algorithm \ref{alg:Main} has at most $\frac{2}{\epsilon}$ iterations.
\end{proof}

\section{Hardness}\label{sec:hardness}
\noindent Here, we represent our reduction from the MAX REP instance of the Label cover$_{\max}$ problem to the directed CSPDP, the reduction to undirected CSPDP follows similarly. LABEL COVER was first introduced in \cite{arora1997hardness}. This problem has shown strong hardness results for many NP-hard problems. It is known that for LABEL COVER, there is no approximation algorithm achieving a ratio $2^{\log^{1-\epsilon}n}$, for
any $0< \epsilon< 1$, unless $NP \subseteq DTIME(n^{polylog(n)})$ \citep{arora1997hardness,hochba1997approximation}. The best approximation algorithm for label cover is the one introduced in \cite{charikar2009improved}, where they propose a $O(n^{1/3})$)-approximation algorithm for MAX REP.

In Max Rep we are given a bipartite graph $G(V_1, V_2, |E|)$. The sets $V_1$ and $V_2$ are split into a disjoint union of $k$ sets: $V_1 = \cup_{i=1}^{k} A_i$ and $V_2 = \cup_{j=1}^{k} B_j$. The sets $A_i$ and $B_j$ all have size $\frac{n}{k}$. The bipartite graph and the partitions of $V_1$ and $V_2$ induce a super-graph $H$ in the following way: The vertices in $H$ are the sets $A_i$ and $B_j$. Two sets $A_i$ and $B_j$ are connected by a (super) edge in $H$ iff there exist $a_i \in A_i$ and $b_j \in B_j$ , which are adjacent in $G$.

We must select a single ``representative'' vertex $a_i \in A_i$ from each subset $A_i$ , and a single “representative” vertex
$b_j \in B_j$ from each $B_j$. We consider that a super-edge ($A_i, B_j$) is covered if the two corresponding representatives are neighbors in G, i.e., $(a_i, b_j) \in E$. The goal is to select a single representative from each set and maximize the number of super-edges covered. We define $u_{max}=3\:d_{max}$ where $d_{max}$ is the maximum degree of all the vertices in $V_2$. Also, for any vertex $b_i \in V_2$ we define $u_{b_i}=3\:d_{v_i}-1$, where $d_{v_i}$ is the degree of vertex $v_i$.

Proof of Theorem \ref{thm:hardnessfinal}:

\begin{proof}

Consider an instance of MAX REP $G(V_1, V_2, E)$ with partitions $V_1=\bigcup_{i=1}^k A_i$, and $V_2=\bigcup_{i=1}^k B_i$. The sets $A_i$ and $B_j$ all have size $\frac{n}{k}$. We construct an instance of cost sharing pairwise distance preservers $(H,P)$ with $|P|=2k$ demand pairs and $O(n^2)$ edges. Any feasible solution for CSPDP instance $(H,P)$ corresponds to at least one feasible solution for MAX REP instance $G(V_1, V_2, E)$ with the same objective value, while any feasible solution for MAX REP instance $G(V_1, V_2, E)$ corresponds to exactly one feasible solution for CSPDP instance $(H,P)$ with the same objective value. 

Here, we present a reduction from directed CSPDP to MAX REP by constructing an instance $(H,P)$ for directed CSPDP. If we consider the undirected counterpart of the same graph $H$ with the same set of demand pair $|P|$, we, similarly, obtain a reduction from MAX REP to the undirected CSPDP.

For each set $A_i: i \in \left \{ 1,\dots,k \right \}$, we define a weighted graph $H^1_i$ that consists of a source node $s_i$ and a sink node $t_i$ and the union of $\frac{n}{k}$ disjoint paths, $p_v: v \in A_i$, between $s_i$ and $t_i$. Each path, $p_v= (s_i, x_v^1, x_v^2, \dots, x_v^{2n} , t_i)$, corresponds to one of the vertices $v \in A_i$ and consists of $2n+1$ edges. We assign weight $u_{max}$ to edges $(s_i, x_v^1)$, $(x_v^{2n} , t_i)$ and all the edges $(x_v^{i}, x_v^{i+1}) : i \equiv 0 (mod 2) $, and assign weight 1 to The edges $(x_v^{i}, x_v^{i+1}) : i \equiv 0 (mod 2) $. Figure \ref{fig:gi} demonstrates construction of the graph $H_i^1$.

\begin{figure}
    \centering
    \includegraphics[width=0.5\linewidth]{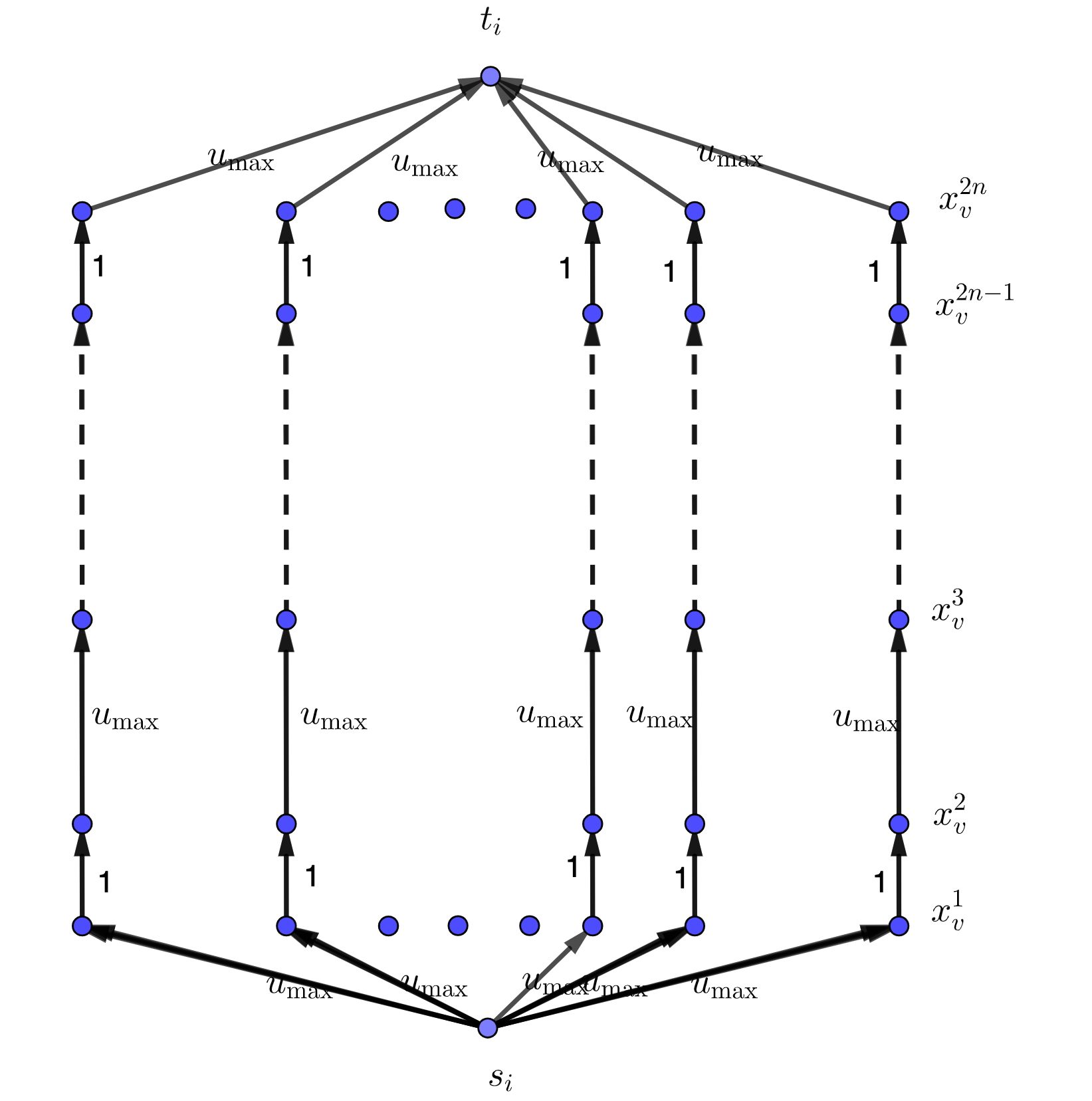}%
    \caption{construction of graph $H_i^1$}
    \label{fig:gi}
\end{figure}

For each set $B_j: i \in \left \{ 1,\dots,k \right \}$, we define a weighted graph $H^2_j$ that consists of a source node $o_j$ and a sink node $d_j$ and a union of $\frac{n}{k}$ edges coming out of $o_j$. Each edge $q_u= (o_i, y_u)$ represents one of the vertices $u \in B_i$. The edges coming out of $o_i$ have weight $2n u_{max}$. Figure \ref{fig:hi} demonstrates the construction of the graph $H_j^2$. We will connect the vertices $y_u$ to the sink nodes $d_i$ by adding some edges to the graph formed by the union $\bigcup_i H^1_i \bigcup_i H^2_i$ in next step.

\begin{figure}
    \centering
    \includegraphics[width=0.5\linewidth]{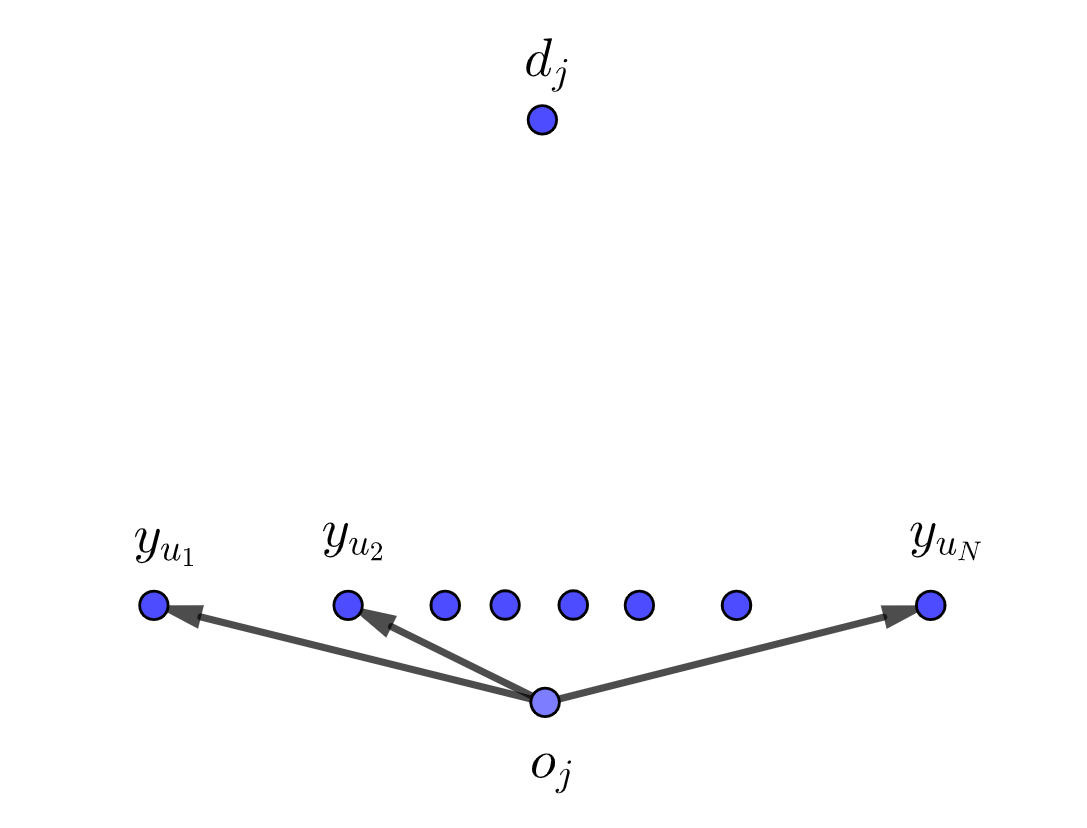}%
    \caption{construction of graph $H_i^2$}
    \label{fig:hi}
\end{figure}

Now, we consider the union $\bigcup_i H^1_i \bigcup_i H^2_i$ and we add edges to construct the graph $H$ for the CSPDP instance as follows. For each vertex $u_r \in B_i: r\in \left \{ 1,2,\dots, 2n \right \}$, we add edges to the graph to construct a new path between $o_i$ and $d_i$ as follows. First, consider all the vertices $v_{i_1}, v_{i_2}, \dots, v_{i_{l_u}} \subseteq V_1$ from which there exists an edge to $u_r$ in $G$. Then, add an edge from $y_{u_r}$ to $x_{v_1}^{2r-1}$ with weight 0. Then, for each $i \in \left \{ 1,2,\dots, l_u-1 \right \}$ add an edge between $x_{v_{i}}^{2r}$ and $x_{v_{i+1}}^{2r-1}$ with weight $2$. Then, add an edge from $x_{v_{l_u}}^{2r}$ to $d_i$ with weight $(2n+1) u_{max}- u_{b_i}$. Figure \ref{fig:H} demonstrates the construction of the graph $H$.

\begin{figure}
    \centering
    \includegraphics[width=1\linewidth]{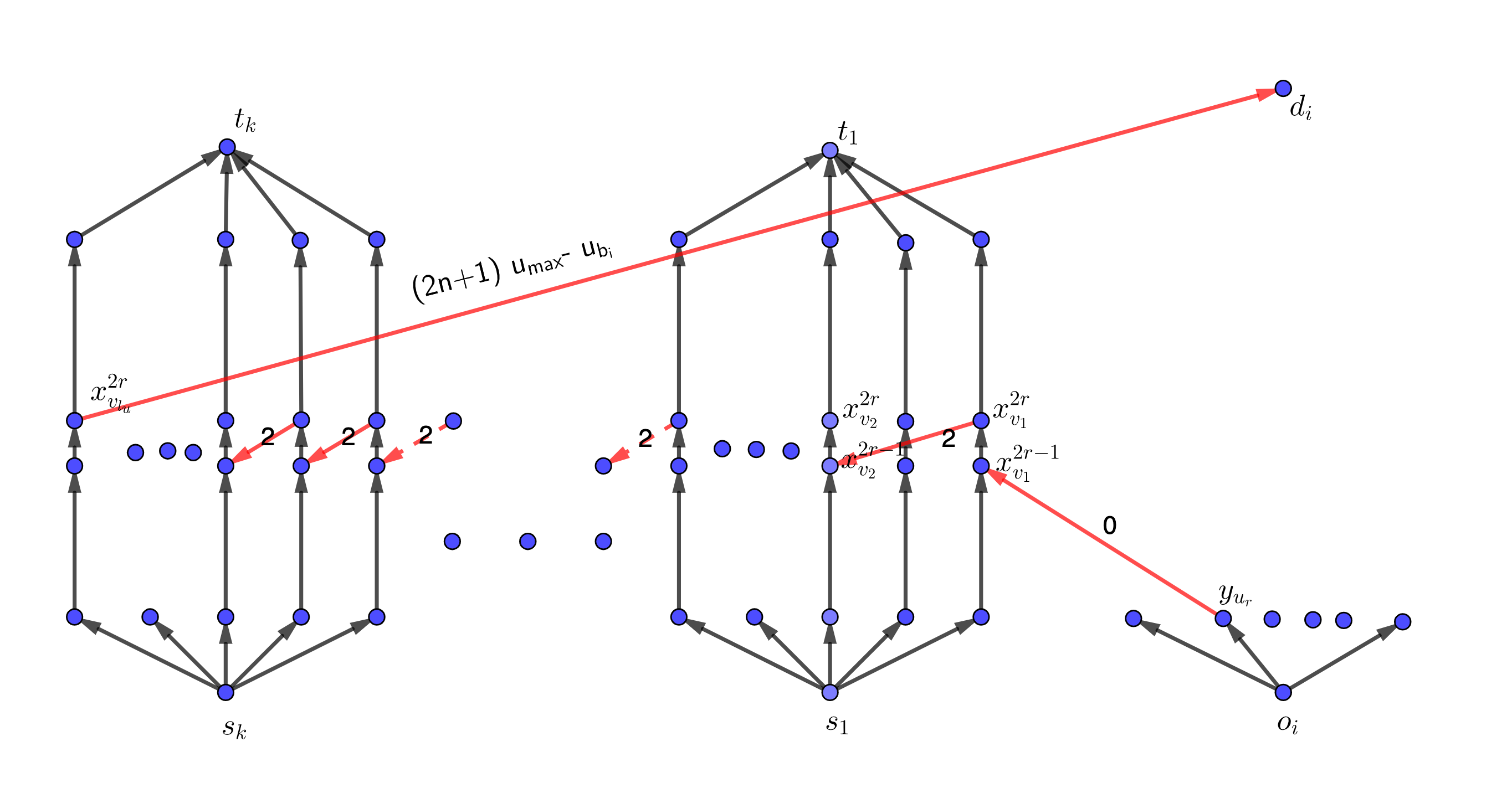}%
    \caption{construction of graph $H$}
    \label{fig:H}
\end{figure}

Now, consider the CSPDP instance with the underlying graph $H$, and the set of demand pairs $\bigcup_{i} (s_i,t_i) \bigcup_{i} (o_i,d_i)$.

\begin{lemma}\label{lemma:shph1}
The only shortest paths from $s_i$ to $t_i$ in graph $H$ are the paths $p_v: v \in A_i$
\end{lemma}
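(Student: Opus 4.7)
The plan is to prove the claim in two stages: first, confine any $s_i$-to-$t_i$ path to the ``$H^1$-substructure'' formed by $\bigcup_{j} H^1_j$ together with the cross edges that connect them; second, inside this substructure, use a level/flow-conservation argument to bound the weight of any $s_i$-to-$t_i$ path from below, with equality only along the paths $p_v$.

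For the first stage, direct inspection of the construction yields that $s_i$ has outgoing edges only into the nodes $x_v^1$ with $v\in A_i$, so the path must start inside $H^1_i$. The only edges leaving the $H^1$-substructure terminate in the nodes $d_j$, but each $d_j$ has no outgoing edges and is therefore a sink, so no such edge can lie on a path that eventually reaches $t_i$. Symmetrically, each $o_j$ has no incoming edges, hence $o_j$ and every $y_u$ are unreachable from $s_i$. Consequently every $s_i$-to-$t_i$ path uses only in-$p_v$ edges (of weights $u_{max}$ and $1$) and cross edges (of weight $2$), and it can visit $t_j$ only for $j=i$, since $t_j$ is a sink for $j\neq i$.

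For the second stage, I would assign a level to each relevant vertex by $\mathrm{level}(s_j)=0$, $\mathrm{level}(x_v^k)=k$ for every $v$, and $\mathrm{level}(t_j)=2n+1$. The three edge classes then change level in very constrained ways: heavy edges (weight $u_{max}$) go from an even level $2r$ to the next odd level $2r+1$ (including the boundary edges $(s_i,x_v^1)$ at $r=0$ and $(x_v^{2n},t_i)$ at $r=n$); light edges (weight $1$) go from an odd level $2r-1$ to $2r$; and cross edges (weight $2$) go backwards from $2r$ down to $2r-1$, possibly jumping between different $p_v$'s and even between different $H^1_j$'s. Applying flow conservation across each of the $2n+1$ consecutive level-cuts, no backward edge crosses an even-to-odd cut, so the path uses exactly one heavy edge across each of the $n+1$ such cuts; across each of the $n$ odd-to-even cuts, light edges go forward while cross edges go backward, so flow conservation forces the number of light uses to equal $1$ plus the number of cross uses at that cut. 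Summing, if $c$ denotes the total number of cross edges used, the path uses at least $n+1$ heavy edges and exactly $n+c$ light edges, so its total weight is at least
\[
(n+1)u_{max} + (n+c)\cdot 1 + c\cdot 2 = (n+1)u_{max} + n + 3c,
\]
which is minimized at $c=0$ and equals $(n+1)u_{max}+n$, exactly the common weight of every $p_v$ with $v\in A_i$.

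Extracting the uniqueness conclusion, equality $c=0$ means no cross edges are used; since cross edges are the only edges connecting distinct $p_v$'s, the entire path must lie inside a single $p_v$, and the initial edge $(s_i,x_v^1)$ forces $v\in A_i$. The only delicate point in writing this up will be the careful bookkeeping of the two boundary cuts (which involve $s_i$ and $t_i$ rather than internal vertices $x_v^k$) and the verification that no shortest path profits from detouring through $H^1_{j}$ for $j\neq i$; the cut argument handles the latter automatically because the level function is defined uniformly across all of the subgraphs $H^1_j$ and the $3c$ penalty is strictly positive whenever any cross edge is taken.
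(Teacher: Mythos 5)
Your proof is correct, and it is both more careful and structurally different from the one in the paper. The paper's Appendix V argument has two parts: it excludes the edges into $d_j$ and out of $o_j$ by a crude length comparison (claiming any path using them must use at least two and hence cost at least $(4Nk+1)u_{max}$), and then simply \emph{asserts} that every cross edge $(x_{v_i}^{2r},x_{v_{i+1}}^{2r-1})$ ``increases the distance from $t_i$'' without justification. You replace the first part by a cleaner reachability observation (in the directed construction each $d_j$ is a sink and each $o_j$ is a source, so these vertices simply cannot lie on an $s_i$--$t_i$ path at all), and you replace the unproved assertion by a quantitative level/cut argument: since heavy, light, and cross edges change level by $+1$, $+1$, and $-1$ respectively in a fixed parity pattern, counting signed crossings of the $2n+1$ level cuts forces exactly $n+1$ heavy edges and $n+c$ light edges for $c$ cross edges, giving total weight $(n+1)u_{max}+n+3c$ with equality to the weight of $p_v$ iff $c=0$. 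This buys you a rigorous lower bound that the paper's proof lacks, handles detours through $H^1_j$ for $j\neq i$ uniformly, and yields uniqueness immediately (with $c=0$ the path cannot switch threads, so it is exactly some $p_v$ with $v\in A_i$); the only caveat is that you must fix the paper's typo in the weight assignment (both parity conditions are written as $i\equiv 0 \pmod 2$), but your reading is the only one consistent with the stated path length $(n+1)u_{max}+n$.
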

\begin{proof}
See Appendix \ref{sec:app5}.
\end{proof}    

\begin{lemma}\label{lemma:shph2}
The only shortest paths from $o_i$ to $d_i$ in graph $H$ are 

the paths 
$q_u= (o_i,y_{u_r}, x_{v_1}^{2r-1}, x_{v_1}^{2r},\dots ,x_{v_i}^{2r-1}, x_{v_i}^{2r}, \dots, x_{v_{l_u}}^{2r-1}, x_{v_{l_u}}^{2r}, d_i)$
\end{lemma}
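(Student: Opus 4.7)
My plan is to exploit the rigid local structure of $H$ around $o_i$ and $d_i$ to pin down the shape of every $o_i$-$d_i$ path, and then compare weights against the natural $q_u$ candidates. The key structural observations are: (i) the only outgoing edges at $o_i$ are the weight-$2n u_{max}$ edges to some $y_u$ with $u \in B_i$; (ii) the unique outgoing edge at $y_{u_r}$ is the weight-$0$ edge into $x_{v_1}^{2r-1}$, where $v_1$ is the first neighbor of $u_r$ in $V_1$; and (iii) the only incoming edges at $d_i$ are the exit edges of weight $(2n+1)u_{max}-u_{u_s}$, one per vertex $u_s \in B_i$, emanating from $x_{v_{l_s}}^{2s}$ with $v_{l_s}$ the last neighbor of $u_s$. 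Hence every $o_i$-$d_i$ path is parameterized by a pair $(r,s)$ of indices in $B_i$ and a middle sub-path inside the $x$-grid joining $x_{v_1^{(r)}}^{2r-1}$ to $x_{v_{l_s}^{(s)}}^{2s}$.

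First I would compute the length of a $q_u$ path explicitly. Summing the entry weight $2n u_{max}$, the weight-$0$ bridge, the $l_{u_r}$ intra-path edges of weight $1$ together with the $l_{u_r}-1$ cross edges of weight $2$, and finally the exit weight $(2n+1) u_{max} - u_{u_r}$, and plugging in $u_{u_r} = 3 l_{u_r} - 1$, the $l_{u_r}$-dependent terms telescope exactly and the total collapses to $(4n+1)u_{max} - 1$, uniformly over $u_r \in B_i$. So every $q_u$ path is tied for length, and it suffices to show every other $o_i$-$d_i$ path is strictly heavier.

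Here I would exploit that along any directed edge of $H$ the \emph{level} of a vertex (the index $r$ such that it lies at position $2r-1$ or $2r$ of some $p_v$) is nondecreasing: intra-$p_v$ edges advance the internal position by one, while cross edges stay within the same level $r$ (they go from position $2r$ on one path to position $2r-1$ on the next). Consequently $s \geq r$, and for $s > r$ the middle portion must include at least $s-r$ of the ``hard'' weight-$u_{max}$ edges that carry position $2r' \to 2r'+1$, so the total weight is at least $(4n+1+(s-r))u_{max} - u_{u_s}$. Using $u_{u_s} = 3 d_{u_s} - 1 \leq 3 d_{max} - 1 = u_{max}-1$ this is strictly larger than $(4n+1)u_{max} - 1$. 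For $s = r$ the middle sub-path lives entirely at level $r$, where the local directed subgraph is a simple directed chain alternating weight-$1$ edges $x_{v_i^{(r)}}^{2r-1} \to x_{v_i^{(r)}}^{2r}$ with weight-$2$ cross edges $x_{v_i^{(r)}}^{2r} \to x_{v_{i+1}^{(r)}}^{2r-1}$; there is a unique directed path from $x_{v_1^{(r)}}^{2r-1}$ to $x_{v_{l_{u_r}}^{(r)}}^{2r}$ in this chain, of weight $3 l_{u_r}-2$, and it reproduces exactly $q_{u_r}$.

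The main obstacle I anticipate is the ``no detour'' claim that once a path leaves level $r$ upwards it cannot drop back into level $r$. This needs a careful audit of the edge catalogue to verify monotonicity of the level index and, in particular, to confirm that a cross edge added for $u_{r'}$ with $r' > r$ never deposits us in a vertex of level $r$. Once this monotonicity is established cleanly, the case split $s<r$, $s=r$, $s>r$ goes through and the rest is the direct arithmetic above, relying on the careful calibration $u_{u_r}=3l_{u_r}-1$ and $u_{max}=3d_{max}$.
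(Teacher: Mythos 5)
Your proof is correct and rests on the same core idea as the paper's: any $o_i$--$d_i$ path that deviates from a $q_u$ must traverse at least one extra weight-$u_{max}$ edge and is therefore strictly longer. However, your write-up is substantially more complete than the paper's Appendix VI, which is only a fragmentary sketch: it contains a literal ellipsis where the relevant edge types should be listed, states the path lengths in terms of an undefined quantity $N$, and displays the comparison inequality in the wrong direction for the contradiction it invokes. You supply three things the paper does not. First, the explicit calibration showing every $q_u$ has length exactly $(4n+1)u_{max}-1$, thanks to the choice $u_{b}=3d_{b}-1$ for the exit-edge offsets; this uniformity is what makes the savings accounting in the main reduction work, and verifying it is not optional. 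Second, the level-monotonicity argument, which is genuinely needed to rule out paths that enter the grid through $y_{u_r}$ at level $r$ but exit through the $d_i$-edge of a different $u_s\in B_i$ with $s>r$; your count of at least $s-r$ forced weight-$u_{max}$ edges, combined with $u_{b_s}\leq u_{max}-1$, closes that case. Third, the observation that within a single level the reachable subgraph is a simple alternating chain of weight-$1$ and weight-$2$ edges, so the $s=r$ case forces the path to coincide with $q_{u_r}$ exactly --- this is what upgrades the claim from optimality of the $q_u$ to the stated uniqueness, and the paper's sketch does not address it at all. The one point you flag as needing an audit, namely that cross edges installed for $u_{r'}$ with $r'\neq r$ never deposit the path back in level $r$, does check out, since those edges join positions $2r'$ and $2r'-1$ only; and the remaining degenerate detours (through some $s_j$ or $t_j$, or through another $o_j$ or $y_u$) are excluded because $t_j$ has no outgoing edges and $o_j$ no incoming ones.
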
 

\begin{proof}
See Appendix \ref{sec:app6}.
\end{proof}

Given an optimal solution $H^*$ to CSPDP instance $(H,P)$, for each pair of nodes $(s,t) \in P$, there exists at least one shortest path in the subgraph $H^*$ connecting the pair of nodes $(s,t)$. Now, for each pair of nodes $(s,t) \in P$ consider one of their corresponding shortest paths in $H^*$ and denote it by $p_{(s,t)}$. As we proved in Lemmas \ref{lemma:shph1} and \ref{lemma:shph2}, for each demand pair $(s_i,t_i) \in A_i$, the shortest path in $H^*$ that connects $s_i$ and $t_i$ is one of the paths $p_{v(i)}: v(i) \in A_i$. Moreover, for each source-sink pair $(o_i,d_i) \in B_i$, the shortest path in $H^*$ that connects $o_i$ and $d_i$ is one of the paths $q_{u(i)}: u(i) \in B_i$. Each of the edges of $H^*$ should be contained in at least one of the shortest paths between demand pairs in $|P|$. As a result, we have $H^*=\bigcup_i p_{v(i)} \bigcup_i q_{v(i)}$. Now, we compute the total savings in $H^*$. First, note that all the paths $p_{v(i)}$ are mutually disjoint and do not have overlap. Also, all the paths $q_{u_{i}}$ are mutually exclusive and do not have overlap. As a result, it is sufficient to compute the savings on the edges that are intersection of a path $p_{v(i)}$ and a path $q_{u(j)}$. Also, note that for each pair of selected paths $p_{v(i)}$ and $q_{u(j)}$ in $H^*$, there exist one edge with weight 1 if and only if $v(i)$ and $u(i)$ are connected in graph $G$ from the MAX REP instance $G(V_1, V_2, E)$. As a result, the total savings can be computed as the number of super-edges covered by vertices $\bigcup_{i=1}^k v(i) \bigcup_{i=1}^k u(i)$, which is a feasible solution to the instance $G(V_1, V_2, E)$.

On the other hand, given a feasible solution $\bigcup_{i=1}^k v(i) \bigcup_{i=1}^k u(i)$ to instance $G(V_1, V_2, E)$. The union of the paths $\bigcup_{i=1}^k p_{v(i)} \bigcup_{i=1}^k q_{u(i)}$ is a feasible solution to the CSPDP instance $(H,P)$. Moreover, the total savings can be computed as the number of super-edges covered by vertices $\bigcup_{i=1}^k v(i) \bigcup_{i=1}^k u(i)$. 

Also, note that number of edges in the constructed graph $H$ is bounded above by $(|V_1|+|V_2|)(2n+2)=O(n^2)$.

Now, we use the result of \cite{arora1997hardness,hochba1997approximation} and the fact that $m$ the number of edges in $H$ is at most $O(n^2)$, to conclude there is no approximation algorithm for CSPDP achieving a ratio $2^{\log^{1-\epsilon}m^{1/2}}$, for any $0< \epsilon< 1$, unless $NP \subseteq DTIME(n^{polylog(n)})$.

This also implies that CSPDP cannot be approximated within $O(m^{1/6-\epsilon})$ factor in polynomial time, unless there is an improvement on the best polynomial time approximation for the $\text{LABEL-COVER}_{\max}$ problem.

\end{proof}

\section*{Acknowledgments}
We are thankful to Saeed Ilchi for the insightful technical discussions and his valuable suggestions on an earlier draft of this paper that has considerably improved its presentation. The work described in this paper was partly supported by research grants from the National Science Foundation (CNS-1837245 and CPS-1837245).

\section*{Appendix I} \label{sec:app1}
\noindent Proof of Lemma \ref{lemma:updc}:
\begin{proof}
Denote by $(x^{TN},y^{TN})$ the optimal solution to the problem \ref{opt:pdpsthin}. Note that $(x^{TN},y^{TN})$ is a feasible solution for problem \ref{opt:pdpsrelax}. We prove $z_2(x^{TN},y^{TN}) \geq \frac{1}{m^{\epsilon}} z_1^{TN}(x^{TN},y^{TN})$. To do so, we can separate the objective function over the edges $e$ that are contained by the local graph of at least one demand pair $(s,t)$ with $b_{(s,t)} \leq \sqrt{m}$ and the edges that does not satisfy this condition.
\begin{equation}\label{eq:division}
    z_1^{TN}(x^{TN},y^{TN})= \sum_{\substack{e:\exists (i,j)| b_{(i,j)}\leq \sqrt{m} \\ e \in E_{(i,j)} \\ e \in E_{TN}} } ( (\sum_{(s,t):e\in E_{(s,t)}} x^{(s,t)}_e) -y_e)c(e) + \sum_{\substack{e:\nexists (i,j)| b_{(i,j)}\leq \sqrt{m} \\ e \in E_{(i,j)} \\ e \in E_{TN}} } (( \sum_{(s,t):e\in E_{(s,t)}} x^{(s,t)}_e) -y_e)c(e)
\end{equation}

The left hand side in \ref{eq:division} equals $z_2(x^{TN},y^{TN})$. In order to prove $z_2(x^{TN},y^{TN}) \geq \frac{1}{m^{\epsilon}} z_1^{TN}(x^{TN},y^{TN})$, it is sufficient to prove the right hand side in \eqref{eq:division} is upper bounded the following expression: 

\begin{equation*}
    (1-\frac{1}{m^{\epsilon}}) z^{TN}_1(x^{TN},y^{TN})
\end{equation*}

Note that $(x^{TN},y^{TN})$ is an optimal solution to the problem \ref{opt:pdpsthin}. As a result, for each edge $e$ satisfying $c(e)> 0$ we have:

\begin{equation*}
    y^{TN}_e = \max_{(s,t):e\in E_{(s,t)}} x^{TN,(s,t)}_e
\end{equation*}

Otherwise We can substitute $y^{TN}_e = \max_{(s,t):e\in E_{(s,t)}} x^{TN,(s,t)}_e$ while keeping the same all other elements of the optimal solution $(\vec{x^{opt}},\vec{y^{opt}})$. Doing so, we obtain a feasible solution to the problem \ref{opt:pdpsthin} that increases the objective value $z_1(\vec{x^{opt}},\vec{y^{opt}})$. This contradicts with our assumption that $(\vec{x^{opt}},\vec{y^{opt}})$ is optimal. As a result, we can rewrite the right hand side in \ref{eq:division} as follows:

\begin{equation*}
    \sum_{\substack{e:\nexists (i,j)| b_{(i,j)}\leq \sqrt{m} \\ e \in E_{(i,j)} \\ e \in E_{TN}} } ( \sum_{(s,t):e\in E_{(s,t)}} x^{TN,(s,t)}_e -y^{TN}_e)c(e)= \sum_{\substack{e:\nexists (i,j)| b_{(i,j)}\leq \sqrt{m} \\ e \in E_{(i,j)} \\ e \in E_{TN}} } ( \sum_{(s,t):e\in E_{(s,t)}} x^{TN,(s,t)}_e -\max_{(s,t):e\in E_{(s,t)}} x^{TN,(s,t)}_e)c(e)
\end{equation*}

The outer summation excludes any edge that is contained in the local graph of a demand pairs $(s,t)$ with $b_{(s,t)} \leq \sqrt{m}$. Hence, we can restrict the inner summation to demand pairs that satisfy $b_{(s,t)}>\sqrt{m}$, while not decreasing the value of the inner summation. Moreover, we can restrict the domain of $\max$ function to decrease its output. As a result:

\begin{equation*}
    \sum_{\substack{e:\nexists (i,j)| b_{(i,j)}\leq \sqrt{m} \\ e \in E_{(i,j)} \\ e \in E_{TN}} } ( \sum_{(s,t):e\in E_{(s,t)}} x^{TN,(s,t)}_e -\max_{(s,t):e\in E_{(s,t)}} x^{TN,(s,t)}_e)c(e) 
\end{equation*}
\begin{equation*}
    \leq \sum_{\substack{e:\nexists (i,j)| b_{(i,j)}\leq \sqrt{m} \\ e \in E_{(i,j)} \\ e \in E_{TN}} } ( (\sum_{\substack{(s,t):e\in E_{(s,t)}\\ b_{(s,t)} > \sqrt{m}}} x^{TN,(s,t)}_e) - \max_{\substack{(s,t):e\in E_{(s,t)}\\ b_{(s,t)} > \sqrt{m}}} x^{TN,(s,t)}_e)c(e)
\end{equation*}
Now, we can expand the domain of the outer summation to obtain:

\begin{equation*}
    \sum_{\substack{e:\nexists (i,j)| b_{(i,j)}\leq \sqrt{m} \\ e \in E_{(i,j)} \\ e \in E_{TN}} } ( (\sum_{\substack{(s,t):e\in E_{(s,t)}\\ b_{(s,t)} > \sqrt{m}}} x^{TN,(s,t)}_e) - \max_{\substack{(s,t):e\in E_{(s,t)}\\ b_{(s,t)} > \sqrt{m}}} x^{TN,(s,t)}_e)c(e) \   
\end{equation*}
\begin{equation*}
    \leq \sum_{e\in E_{TN}} ( (\sum_{\substack{(s,t):e\in E_{(s,t)}\\ b_{(s,t)} > \sqrt{m}}} x^{TN,(s,t)}_e) - \max_{\substack{(s,t):e\in E_{(s,t)}\\ b_{(s,t)} > \sqrt{m}}} x^{TN,(s,t)}_e)c(e)
\end{equation*}

Now, we can use definition \ref{def:light} to conclude:
\begin{equation*}
    \sum_{e\in E_{TN}} ( (\sum_{\substack{(s,t):e\in E_{(s,t)}\\ b_{(s,t)} > \sqrt{m}}} x^{TN,(s,t)}_e) - \max_{\substack{(s,t):e\in E_{(s,t)}\\ b_{(s,t)} > \sqrt{m}}} x^{TN,(s,t)}_e)c(e) \leq  (1-\frac{1}{m^{\epsilon}}) z^{TN}_1(x^{TN},y^{TN})
\end{equation*}
This concludes the proof.
\end{proof}

\section*{Appendix II} \label{sec:app4}
\noindent In this Section we provide an Algorithm to transfer a link-based solution to path-based solution in a reasonable amount of time.

\vspace{0.5cm}
\begin{algorithm}[H]
    \caption{transforming link-based solution to path-based solution}
    \label{alg:transfer}
\vspace{0.5cm}
\begin{algorithmic}

 \STATE 1. For each pair $(s,t)$, consider the edge $e$ with minimum $x^{2,(s,t)}_e$, and find a shortest path $p^e_{(s,t)}$ in $G_{(s,t)}$ from $s$ to $t$ (similar to the procedure in Lemma \ref{thm:main}).
 \STATE 2. Substitute $x^{2,(s,t)}_g$ by $x^{2,(s,t)}_g-x^{2,(s,t)}_e$ for all edges $g$ contained in $p^e_{(s,t)}$
 \STATE 3. Add path $p^e_{(s,t)}$ with weight $f_{p^e_{(s,t)}}=x^{2,(s,t)}_e$ to the set of paths F.
 \STATE 4. If $\vec{x} = \vec{0}$ go to step 5 otherwise go to step 1.
 \STATE 5. Return F.
\end{algorithmic}
\end{algorithm}
\vspace{0.5cm}

\section*{Appendix III} \label{sec:app2}
\noindent Proof of Lemma \ref{lemma:updc}:
\begin{proof}

If $c(e) = 0$, then both sides of \ref{dynmatch} equal zero. It remains to consider the case $c(e) \geq 0$. The variables $x^{opt,(s,t)}_e$ and $y^{opt}_l$ are binary variables. As such, the summation $\sum_{(s,t):e\in E_{(s,t)}} x^{opt,(s,t)}_e$ can be either $0$ or greater than $0$. Let us assume there exists an optimal solution $(\vec{x^{opt}},\vec{y^{opt}})$ with $\sum_{(s,t):e\in E_{(s,t)}} x^{opt,(s,t)}_e=0, y^{opt}_l=1$. We can substitute $y^{opt}_e=0$ while keeping the same all other elements of the optimal solution $(\vec{x^{opt}},\vec{y^{opt}})$. Doing so, we obtain a feasible solution to the problem \ref{opt:pdps} that increases the objective value $z_1(\vec{x^{opt}},\vec{y^{opt}})$ by at least $c(e)$. This contradicts with our assumption that $(\vec{x^{opt}},\vec{y^{opt}})$ is optimal.

As a result it remains to consider the two following cases:
\begin{itemize}
    \item $\sum_{(s,t):e\in E_{(s,t)}} x^{opt,(s,t)}_e=0$ and $y^{opt}_e=0$: In this case both sides of expression \ref{dynmatch} equal 0.
    \item $\sum_{(s,t):e\in E_{(s,t)}} x^{opt,(s,t)}_e\geq 1$: In this case constraint \ref{cons1:vector} leads $y^{opt}_l=1$. As such, 
    \begin{equation*}
        ((\sum_{(s,t):e\in E_{(s,t)}} x^{opt,(s,t)}_e) -y^{opt}_e)v_e c(e) = ((\sum_{(s,t):e\in E_{(s,t)}} x^{opt,(s,t)}_e) -1)v_e c(e)
    \end{equation*}
    On the other hand, in any feasible solution $(\vec{x^{opt}},\vec{y^{opt}})$, the maximum number of shortest paths containing edge $l$, $x^{opt,(s,t)}_e=1$, is less than or equal to the number of local subgraphs containing edge $e$, i.e. $(\sum_{(s,t):e\in E_{(s,t)}} x^{opt,(s,t)}_e) \leq v_e$. Therefore:
    \begin{equation*}
         ((\sum_{(s,t):e\in E_{(s,t)}} x^{*(s,t)}_e) -1)v_e c(e)\leq ( \sum_{(s,t):e\in E_{(s,t)}} x^{(s,t)}_e)(v_e-1)c(e)
    \end{equation*}
    This concludes the proof.
\end{itemize}
\end{proof}

\section*{Appendix IV} \label{sec:app3}
\noindent Proof of Lemma \ref{eq:partleft}:

\begin{proof}
We prove this using induction on the size of set $S$. For the base case $|S|=1$. This yields:
\begin{equation*}
    x_e -1+ (1-x_e)=0
\end{equation*}
Assume for any set $|S|=n$ we have the following:

\begin{equation}
    \sum_{e\in S}x_e-1+\prod_{e\in S} (1-x_e)\geq 0
\end{equation}

Consider a new set $S^*: |S^*|=n+1$ and an arbitrary element $e^* \in S^*$. As a result:
\begin{equation}\label{eq:apundist}
    \sum_{e\in S^*}x_e-1+\prod_{e\in S^*} (1-x_e)= (\sum_{e\in S^* \setminus e^*}x_e-1)+x_{e^*}+(\prod_{e\in S^*\setminus e^*} (1-x_e))(1-x_{e^*})
\end{equation}

We can distribute the multiplication in the right hand side of \ref{eq:apundist} to rewrite \ref{eq:apundist} as follows:
\begin{equation*}
    (\sum_{e\in S^* \setminus e^*}x_e-1)+(\prod_{e\in S^*\setminus e^*} (1-x_e))+x_{e^*}-x_{e^*}(\prod_{e\in S^*\setminus e^*} (1-x_e))
\end{equation*}

On the other hand, we have $\prod_{e\in S^*\setminus e^*} (1-x_e) \leq 1$. Hence:
\begin{equation*}
    \sum_{e\in S^*}x_e-1+\prod_{e\in S^*} (1-x_e)\geq (\sum_{e\in S^* \setminus e^*}x_e-1)+(\prod_{e\in S^*\setminus e^*} (1-x_e)) \geq 0
\end{equation*}

Where the last inequality comes from the induction assumption.

\end{proof}

\section*{Appendix V} \label{sec:app5}
\noindent Proof of Lemma \ref{lemma:shph1}:

If a shortest path $p_{i}$ from $s_i$ to $t_i$ in graph $H$ contains any edges of type $(x_{v_{l_u}}^{2r},d_i)$ or edges of type $(o_i, y_{u_r})$. Then, it contains at least two of them, since the path $p_{i}$ should come back to the graph $H_i^1$. This leads to the shortest path having length at least $(4Nk+1) u_{max}$. However, the paths $p_v: v \in A_i$ have length $(Nk+1) u_{max}+ Nk$ This contradicts with the path $p$ being a shortest path. Moreover, any of the edges $(x_{v_{i}}^{2r},x_{v_{i+1}}^{2r-1})$ increases the distance from $t_i$. As a result, the only shortest paths from $s_i$ to $t_i$ in graph $H$ are the paths $p_v: v \in A_i$.

\section*{Appendix VI} \label{sec:app6}
\noindent Proof of Lemma \ref{lemma:shph2}:

If a shortest path $p_{i}$ from $o_i$ to $d_i$ in graph $H$ contains any edges from ... of weight $u_{max}$ then the length of the path $p_{i}$ is at least $(4Nk+1) u_{max}$. However, the paths $q_u: v \in B_i$ have length $(Nk+2) u_{max}- u_{b_i}>(4Nk+1) u_{max}$ This contradicts with the path $p$ being a shortest path. As a result, The only shortest paths from $o_i$ to $d_i$ in graph $H$ are the paths 
$q_u$.

\bibliographystyle{plainnat}
\bibliography{refs}

\end{document}